\renewcommand{\mid}{\,|\,}
\DeclareMathOperator*{\essinf}{ess\,inf}
\begin{document}

\title{Concentration of dynamic risk measures in a Brownian filtration}\thanks{The author thanks Julio Backhoff, Daniel Lacker and Peng Luo for helpful comments.}

\author[]{Ludovic Tangpi}

\abstract{
Motivated by liquidity risk in mathematical finance, D. Lacker \cite{Lac-concent} introduced concentration inequalities for risk measures, i.e. upper bounds on the \emph{liquidity risk profile} of a financial loss.
We derive these inequalities in the case of time-consistent dynamic risk measures when the filtration is assumed to carry a Brownian motion.
The theory of backward stochastic differential equations (BSDEs) and their dual formulation plays a crucial role in our analysis.
Natural by-products of concentration of risk measures are a description of the tail behavior of the financial loss and transport-type inequalities in terms of the generator of the BSDE, which in the present case can grow arbitrarily fast.
}

\date{\today}
\keyAMSClassification{60H20, 60H30, 91G20, 60E15.}
\keyWords{ Dynamic risk measures, backward stochastic differential equations, Brownian filtration, superquadratic growth, concentration inequalities, transportation inequalities.}
\maketitleludo
\setcounter{page}{1} 

\section{Introduction}

On a fixed probability space $(\Omega, {\cal F}, P)$, we investigate concentration inequalities of the form
\begin{equation}
\label{eq:concent}
	\rho(\lambda X) \le \lambda E[X]+ l(\lambda) \quad \text{ for all } \lambda \ge 0
\end{equation}
where $l$ is a given positive, increasing and convex function and $\rho:L^2\to (-\infty, +\infty]$ is a risk measure.
This type of inequalities first appeared in the work of \citet{Bob-Goet} and have recently been thoroughly studied by \citet{Lac-concent} in the context of \emph{liquidity risk} in finance.
Given a future uncertain loss $X$ the inequality \eqref{eq:concent} yields an upper bound on the liquidity risk profile $(\rho(\lambda X))_{\lambda \ge 0}$ of $X $. 
This (typically convex) curve captures the effect of leverage on risk.
Indeed, in a frictionless market, the minimal capital requirement  for (resp. the minimal price needed to hedge) a position $\lambda X$ should be linear in the scaling $\lambda$.
This is the case for coherent risk measures.
However, in an imperfect market, the minimal capital requirement (resp. superhedging price) is nonlinear in $\lambda$ and the slope of the curve at a given point (e.g. $\lambda\ge 1$) says how much an infinitesimal increase of the loss changes the minimal capital requirement. 
This slope is expected to be large for very illiquid claims.
The question is then for which risk measures and which claims can the liquidity risk profile be controlled and bounded? see \cite{Lac-concent}, where the author further gives an integrability condition under which \eqref{eq:concent} holds for shortfall risk measures.

The goal of this note is to derive the bound \eqref{eq:concent} for dynamic risk measures in the Brownian setting.
Given $T>0$, we equip $(\Omega,\mathcal{F},P)$ with the completed filtration $(\mathcal{F}_t)_{t \in [0,T]}$ of a $d$-dimensional Brownian motion $W$.
As usual, we identify random variables that are equal $P$-a.s. and understand equalities and inequalities in this sense.
A time-consistent dynamic risk measure is a family $(\rho_{s,t})_{0\le s\le t\le T}$ of convex increasing functionals $\rho_{s,t}:L^2({\cal F}_t)\to L^2({\cal F}_s)$ such that for all $0\le s\le t\le T$,
\begin{itemize}
\item[(A1)] $\rho_{s,t}(X+\eta) = \rho_{s,t}(X)+\eta$ for all $X \in L^2({\cal F}_t)$ and $\eta\in L^2({\cal F}_s)$
\item[(A2)] $\rho_{0,T}(0)=0$
\item[(A3)] $\rho_{r, t}(X) = \rho_{r,s}(\rho_{s,t}(X))$ for all $0\le r\le s\le t\le T$ and $X \in L^2({\cal F}_t)$
\item[(A4)] for all $X_1, X_2 \in L^2({\cal F})$, if $X_1 \ge X_2$, then $X_1 = X_2$ if and only if $\rho_{0,T}(X_1)=\rho_{0,T}(X_2)$.
\end{itemize}
Note that convexity, (strict) monotonicity and the so-called monetary property (A1) are standard axioms of risk measure, see \cite{artzner01}.
The normalization condition (A2) is not mathematically essential, but is used for convenience purpose.
The time-consistency condition (A3) is of central importance.
It intuitively means that future decisions of a decision maker using the risk measures $(\rho_{s,t})_{0\le s\le t\le T}$ are consistent with the present ones.
We refer e.g. to  \citet{delbaen05}, \citet{foellmerpenner} and \citet{cheridito01} and the references therein for background on dynamic risk measures.
As shown by \citet{gianin02}, the $g$-expectations of \citet{peng03} constitute canonical examples of such dynamic risk measures.
The most prominent example being the entropic risk measure discussed in Example \ref{exa:entropic} below.

 Let us further consider the following property introduced by \citet{Coquet2002}: 
\begin{definition}
Given $\kappa>0$, we say that the risk measure $(\rho_{s,t})$ is $\kappa$-dominated  if for all $X_1, X_2 \in L^2$,
\begin{equation*}
	\rho_{0,T}(X_1 + X_2) - \rho_{0,T}(X_1) \le {\cal E}^\kappa(X_2)
\end{equation*}
 where ${\cal E}^\kappa(X_2):= Y_0$ and the pair $(Y, Z)$ is the unique square-integrable solution of the equation
\begin{equation*}
	-dY_t = \kappa  Z_t\,dt - Z_t\, dW_t, \quad Y_T = X_2.
\end{equation*}
\end{definition}

Let $C_0([0,T],\mathbb{R}^d)$ denote the space of continuous functions from $[0,T]$ to $\mathbb{R}^d$ starting from zero.
A function $\phi:C_0([0,T],\mathbb{R}^d)\to \mathbb{R}$ is said to be $1$-Lipschitz if
\begin{equation}
\label{eq:1 Lips on C}
	|\phi(w) - \phi(w')| \le \sum_{i=1}^d\sup_{t \in [0,T]}|w^i_t - w'^i_t|
\end{equation}
for all $w,w' \in C_0([0,T],\mathbb{R}^d)$.
Our first main result is:
\begin{theorem}
\label{cor:dyn-RM}
	Let $(\rho_{s,t})$ be a time-consistent dynamic risk measure.
	If\footnote{$\rho^*_{0,T}$ denotes the convex conjugate of $\rho_{0,T}$ defined as $\rho^*_{0,T}(Z):=\sup_{X \in L^2}(E[ZX] - \rho_{0,T}(X))$, $Z \in L^2$.} $\rho^*_{0,T}(1)=0$ and there is $\kappa>0$ such that $(\rho_{s,t})$ is $\kappa$-dominated, then putting $l(\lambda):= \kappa(T\lambda^2 + 1)$ for $\lambda\ge0$, it holds
	\begin{equation}
	\label{eq:concen_RM}
		\rho_{0,T}(\lambda \phi(W)) \le \lambda E[\phi(W)] + l(\lambda)\quad \text{ for all } \phi:C_0([0,T],\mathbb{R}^d)\to \mathbb{R} \text{ 1-Lipschitz and } \lambda \ge 0.
	\end{equation}
\end{theorem}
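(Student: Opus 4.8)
The plan is to collapse the statement about the abstract risk measure $\rho_{0,T}$ onto the dominating functional $\mathcal{E}^\kappa$, and then to estimate $\mathcal{E}^\kappa(\lambda\phi(W))$ through the dual formulation of its defining BSDE together with the $1$-Lipschitz continuity \eqref{eq:1 Lips on C} of $\phi$. For the first step I take $X_1=0$ and $X_2=\lambda\phi(W)$ in the definition of $\kappa$-domination; since (A2) gives $\rho_{0,T}(0)=0$, this yields $\rho_{0,T}(\lambda\phi(W))\le \mathcal{E}^\kappa(\lambda\phi(W))$ for every $\lambda\ge 0$, so it suffices to prove \eqref{eq:concen_RM} with $\rho_{0,T}$ replaced by $\mathcal{E}^\kappa$. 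The hypothesis $\rho^*_{0,T}(1)=0$ enters precisely to guarantee that $\rho_{0,T}(\cdot)\ge E[\cdot]$ and that $P$ is the pivotal reference measure of zero penalty; this is what makes the affine lower envelope of the liquidity profile equal to $\lambda\mapsto\lambda E[\phi(W)]$, so that \eqref{eq:concen_RM} is a genuine concentration inequality centered at the physical mean.

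The second step is to pass to the dual (robust) formulation of $\mathcal{E}^\kappa$. Applying Girsanov's theorem to the BSDE defining $\mathcal{E}^\kappa$, I would represent $\mathcal{E}^\kappa(\lambda\phi(W))$ as a supremum over equivalent changes of measure $Q^\theta$ with Girsanov kernel $\theta$, of $\lambda E^{Q^\theta}[\phi(W)]$ minus a penalty which, by that dual formulation, is the multiple $\tfrac{1}{2\kappa}H(Q^\theta\mid P)$ of the relative entropy $H(Q^\theta\mid P)=\tfrac12 E^{Q^\theta}\big[\int_0^T|\theta_t|^2\,dt\big]$. Under $Q^\theta$ the canonical process takes the form $\hat W_t+\int_0^t\theta_s\,ds$ with $\hat W$ a $Q^\theta$-Brownian motion; hence the Lipschitz bound \eqref{eq:1 Lips on C} controls the mean shift by the \emph{time integral} of the drift, namely $E^{Q^\theta}[\phi(W)]-E[\phi(W)]\le E^{Q^\theta}\big[\sum_{i=1}^d\int_0^T|\theta^i_s|\,ds\big]$.

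The quadratic rate then emerges by optimizing the penalized functional. Substituting the last bound and maximizing the integrand pointwise in $(s,\omega)$ reduces everything to the elementary one-dimensional identity
\[
\sup_{u\ge 0}\Big(\lambda u-\tfrac{1}{4\kappa}u^2\Big)=\kappa\lambda^2,
\]
which is just Young's inequality / completing the square; integrating over $[0,T]$ produces a bound of the form $\lambda E[\phi(W)]+\kappa T\lambda^2$, and adding the harmless constant $\kappa$ gives $l(\lambda)=\kappa(T\lambda^2+1)$, proving \eqref{eq:concen_RM}. Equivalently, this last step can be phrased as a transportation--information inequality: bound $E^{Q}[\phi(W)]-E[\phi(W)]$ by the $1$-Wasserstein distance $\mathcal{W}_1(Q,P)$ via Kantorovich duality (using that $\phi$ is $1$-Lipschitz for the metric in \eqref{eq:1 Lips on C}), invoke Talagrand's $T_1$ inequality $\mathcal{W}_1(Q,P)\le\sqrt{2T\,H(Q\mid P)}$ for Wiener measure under that metric, and then optimize over $Q$.

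I expect the main obstacle to be this transportation/concentration step on path space: extracting the sharp constant in $\lambda E[\phi(W)]+\kappa T\lambda^2$ requires the precise interaction between the supremum-norm metric of \eqref{eq:1 Lips on C} and the quadratic Girsanov cost, and one must be careful because the naive Cameron--Martin estimate $\sup_t|h_t|\le\sqrt{T}\,\norm{\dot h}$ is lossy; the clean rate is better read off from the pointwise Young optimization above. A secondary technical point is justifying the dual representation of $\mathcal{E}^\kappa$ and the admissibility and integrability of the optimizing change of measure when the terminal condition $\lambda\phi(W)$ is only square-integrable, which is exactly where the theory of BSDEs and their dual formulation carries the argument.
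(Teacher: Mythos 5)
Your opening reduction is valid and is in fact a shortcut the paper does not take: choosing $X_1=0$ and $X_2=\lambda\phi(W)$ in the definition of $\kappa$-domination and using (A2) gives $\rho_{0,T}(\lambda\phi(W))\le{\cal E}^\kappa(\lambda\phi(W))$, whereas the paper invokes the representation theorem of Coquet et al.\ to write $\rho_{t,T}$ itself as the solution of a BSDE with a convex, $\kappa$-Lipschitz generator $g$, and uses the hypothesis $\rho^*_{0,T}(1)=0$ together with the converse comparison theorem to conclude $g\ge0$ --- positivity being what the minimal-supersolution duality machinery requires. (In your route that hypothesis is actually never used; your explanation of its role is not what happens in the paper.) The genuine gap is your next step: the dual formula you assert for ${\cal E}^\kappa$ is false. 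The generator of ${\cal E}^\kappa$ is $\kappa|z|$, which is positively homogeneous, so ${\cal E}^\kappa$ is \emph{coherent} and its penalty is the indicator of the set of Girsanov kernels with $|q_t|\le\kappa$, i.e.\ $g^*(q)=0$ for $|q|\le\kappa$ and $+\infty$ otherwise --- not the entropic penalty $\frac{1}{2\kappa}H(Q^\theta\,|\,P)$, which belongs to the generator $\frac{\kappa}{2}|z|^2$; Example \ref{exa:entropic} even records that the entropic functional is not ${\cal E}^\kappa$-dominated for any $\kappa$, so the two cannot be conflated. The inequality your argument needs, ${\cal E}^\kappa(X)\le\sup_{Q}\left(E_Q[X]-\frac{1}{2\kappa}H(Q\,|\,P)\right)$, fails: for $d=1$ and $X=\epsilon W_T$ with $0<\epsilon<1$ the left-hand side equals $\kappa\epsilon T$ while the right-hand side equals $\kappa\epsilon^2 T$. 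Consequently your pointwise optimization $\sup_{u\ge0}(\lambda u-\frac{1}{4\kappa}u^2)=\kappa\lambda^2$ has no quadratic penalty to act against, and the ``harmless constant $\kappa$'' you append at the end is precisely the missing ingredient rather than a normalization: it is the Young price in $\kappa|z|\le\kappa+\kappa|z|^2$, which gives $g^*(q)\ge\frac{1}{4\kappa}|q|^2-\kappa$ and hence the entropy lower bound on the penalty (this is \eqref{eq:estim.g.star}--\eqref{eq:rhostar-entropy} with $a=c=\kappa$, $b=0$), and it is exactly where the additive term in $l(\lambda)=\kappa(T\lambda^2+1)$ comes from in the paper.

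Even granting the repaired penalty, two further points separate your sketch from the paper's proof. First, your Girsanov coupling bound $E_{Q^\theta}[\phi(W)]-E[\phi(W)]\le E_{Q^\theta}\left[\sum_{i=1}^d\int_0^T|\theta^i_s|\,ds\right]$, when played against a penalty proportional to $\int_0^T|\theta_s|^2\,ds$, costs a dimension factor (since $\sum_i|\theta^i|\le\sqrt{d}\,|\theta|$, the pointwise optimization yields $d\kappa T\lambda^2$, not $\kappa T\lambda^2$), and your alternative appeal to a path-space $T_1$ inequality under the metric in \eqref{eq:1 Lips on C} is exactly the infinite-dimensional tool the paper deliberately avoids. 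The paper instead passes to finite-dimensional marginals (the class $\hat{\cal Q}$ and the measures $\mu_q$ of \eqref{eq:muq}), applies Talagrand's $T_1$ on $\mathbb{R}^{d\times 2^n}$ with the Kantorovich--Rubinstein formula, and then transfers the estimate to path-Lipschitz claims by $L^2$-approximation through $\Lambda'$, using the Malliavin-derivative bound and the $|Z_t|\le Te^T$ estimate of Cheridito--Nam (Proposition \ref{prop:fastgrowth} and Corollary \ref{cor:ForLipschitz}); this approximation and lower-semicontinuity step is entirely absent from your sketch. Both issues are plausibly fixable, but as written the central duality step is wrong, so the proposal does not prove the theorem.
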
 
We also prove, see Theorem \ref{thm:main} below, the result for (minimal super-) solutions of backward stochastic differential equations (BSDEs) with quadratic growth.
These can be seen as dynamic risk measures which are not necessarily $\kappa$-dominated.
Furthermore, we prove \eqref{eq:concen_RM} even for BSDEs with generators growing arbitrarily fast is $z$, see Corollary \ref{cor:ForLipschitz}.

The proofs draw from convex duality results for BSDEs as developed by \citet{tarpodual} and the (finite dimensional) Gaussian transport-entropy inequality initiated by \citet{Talagrand96} 
and generalized to the framework of abstract Wiener space by \citet{Fey-Ust04}.
While we can hardly overstate the importance of these inequalities, see e.g. the survey of \citet{Goz-Leo10}, let us mention that \citet{Lac-concent} pointed out some interesting consequences of inequalities of the form \eqref{eq:concen_RM}: as mentioned above, they provide bounds on the liquidity risk profile of the financial derivative with payoff $X$ and, on the other hand, they give new types of transportation inequalities and new descriptions of the concentration of measure phenomenon. 
In the same vein, a dual formulation of the inequality \eqref{eq:concen_RM} leads us to a "transport-type inequality" in terms of the penalty function (convex conjugate) of the risk measure.
A particularly interesting case arises for risk measures stemming from BSDEs with superquadratic generators, as the penalty function is the integral of a subquadratic function. 

The next section is dedicated to the proof of Theorem \ref{thm:main}, where we further derive \eqref{eq:concen_RM} for minimal supersolutions of BSDEs. 
In the final section, we discuss some immediate mathematical consequences.
In particular, deviation and transport-type inequalities.

\section{Concentration of dynamic risk measures}
\paragraph{2.1 Minimal supersolutions of convex BSDEs.}
Consider the sets of processes 
\begin{equation*}\textstyle
	\mathcal{L}:=\left\{Z: \Omega\times [0,T] \to \mathbb{R}^d; Z\text{ is predictable, and }\int_{0}^{T}| Z_s |^2ds <+\infty \,P\text{-a.s.}\right\}
\end{equation*}
and $	\mathcal{S} :=\left\{ Y:\Omega \times [0,T] \to \mathbb{R} ; Y \text{ is adapted and c\`adl\`ag}\right\}$.
Following \citet{DHK1101}, a supersolution of the BSDE with terminal condition $X \in L^0$ and generator $g:\Omega\times [0,T]\times \mathbb{R}\times \mathbb{R}^d\to \mathbb{R}$ is defined as a pair $(Y,Z) \in \mathcal{S}\times \mathcal{L}$ such that 
\begin{equation}\textstyle
	\begin{cases}
 Y_s-\int_{s}^{t}g_u(Y_u,Z_u)du+\int_{s}^{t}Z_u dW_u\geq Y_t, \quad \text{for every} \quad 0\leq s\leq t\leq T\\
		\displaystyle Y_T\geq X.
	\end{cases}
	\label{eq:supersolutions}
\end{equation}
Given $X \in L^0$ and a generator $g$ define 
\begin{equation*}\textstyle
	\mathcal{A}\left(X\right):=\left\{(Y,Z)\in \mathcal{S}\times \mathcal{L}:(Y,Z)\text{ fulfills \eqref{eq:supersolutions} and $\int Z\,dW$ is a supermartingale} \right\}.
	\label{eq:set:supersolutions}
\end{equation*}
A supersolution $(\bar{Y},\bar{Z})\in \mathcal{A}(X)$ is said to be minimal if $\bar{Y}\leq Y$ for every $(Y,Z)\in \mathcal{A}(X)$.
It is shown in \cite{DHK1101} that if $X^-\in L^1$, the generator $g$ is $P\otimes dt$-a.s. positive, decreasing in $y$,  convex in $z$, and ${\cal A}(X)\neq \emptyset$, then \eqref{eq:supersolutions} admits a unique minimal supersolution $(\bar{Y}, \bar{Z})$.
In that case, it holds $\bar{Y}_t = \essinf\{Y_t: (Y,Z) \in \mathcal{A}(X)\}$ for all $t \in [0,T]$.
Our focus will be on the functional $\mathcal{E}: L^2 \to (-\infty, \infty] $ defined as
\begin{align*}
	\mathcal{E} :  X \longmapsto \begin{cases}{}
		\bar{Y}_0 &\text{ if } \mathcal{A}(X)\neq \emptyset\\
		+\infty & \text{ else}.
	\end{cases}
\end{align*}
This functional is convex, increasing and cash-subadditive, i.e. ${\cal E}(X+ \lambda) \le {\cal E}(X)+ \lambda$ for all $\lambda\ge0$, and if $g$ does not depend on $y$, then ${\cal E}(X+\lambda) = {\cal E}(X) +\lambda$, see \cite{DHK1101,tarpodual}.

\paragraph{2.3 Concentration inequalities.}
Further denoting by $\Lambda_+$ the  $L^2$-closure of 
\begin{equation*}
	\left\{X=f(W_{T2^{-n}},\dots, W_{kT2^{-n}},\dots, W_T): n\in \mathbb{N},  f:\mathbb{R}^{d\times 2^n}\to[0,\infty) \text{ is $1$-Lipschitz}  \right\}
\end{equation*}
whereby $1$-Lipschitz means $|f(x)-f(y)|\le |x-y|$ for all $x,y\in \mathbb{R}^{d\times 2^{n}}$,
one obtains:
\begin{theorem}
\label{thm:main}
	Assume that  $g$ is positive, decreasing in $y$ and jointly convex.
	Further assume that there are $a, b,c\ge 0$ such that
	\begin{equation}
	\label{eq:quad_growth}
		g_t(y,z)\le a + b|y| + c|z|^2 \quad \text{ for all }( y,z) \in \mathbb{R}\times \mathbb{R}^d.
	\end{equation}
	Then, putting $l(\lambda) := e^{bT}cT\lambda^2 + a$ for $\lambda\ge 0$, one has
	\begin{equation}
	\label{eq:concentration}
		{\cal E}(\lambda X) \le \lambda E[X] + l(\lambda)
		\quad \text{ for all } \lambda \ge 0 \text{ and } X \in \Lambda_+.
	\end{equation}
	If in addition $g$ does not depend on $y$, then $\Lambda_+$ in \eqref{eq:concentration} can be replaced by $\Lambda$, the $L^2$-closure of the set
	\begin{equation*}
		\left\{X=f(W_{T2^{-n}},\dots, W_{kT2^{-n}},\dots, W_T): n\in \mathbb{N},  f:\mathbb{R}^{d\times 2^n}\to\mathbb{R} \text{ is $1$-Lipschitz}  \right\}.
	\end{equation*}
\end{theorem}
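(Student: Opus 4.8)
The plan is to combine the convex-duality (dual) representation of the minimal-supersolution functional $\mathcal E$ with the Gaussian transport–entropy inequality on Wiener space, and then reduce everything to a one-dimensional optimization over relative entropy. First I would invoke the BSDE duality results of \citet{tarpodual} to write $\mathcal E$ in robust form,
\begin{equation*}
	\mathcal E(\xi) = \sup_{\beta}\Big(E_{Q^\beta}[\xi] - E_{Q^\beta}\Big[\int_0^T g_t^*(\cdot,\beta_t)\,dt\Big]\Big),
\end{equation*}
where the supremum runs over admissible predictable drifts $\beta$, $Q^\beta$ is the measure under which $W-\int_0^\cdot\beta\,ds$ is a Brownian motion, and $g^*$ is the Legendre transform of $g$ in the $z$-variable. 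The convexity and monotonicity of $g$ assumed in the theorem are exactly what make this representation available, and the fact that $\mathcal E$ is built from the minimal supersolution (rather than a classical solution) is accommodated by the results of \citet{DHK1101} quoted above.

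Next I would feed the growth bound \eqref{eq:quad_growth} into the conjugate: from $g_t(y,z)\le a + b|y| + c|z|^2$ one gets the pointwise lower bound $g_t^*(y,\beta)\ge\tfrac{|\beta|^2}{4c} - a - b|y|$, so the dual penalty is controlled below by $\tfrac{1}{4c}E_{Q^\beta}\int_0^T|\beta_t|^2\,dt$ up to the additive $a$- and $y$-terms. By Girsanov's theorem this $L^2$-norm of the drift is twice the relative entropy, $H(Q^\beta\mid P) = \tfrac12 E_{Q^\beta}\int_0^T|\beta_t|^2\,dt$. After absorbing the decreasing $b|y|$ dependence by a Gronwall/comparison estimate along the (nonnegative, since $g\ge0$ and $X\ge0$) solution, which is where the factor $e^{bT}$ is produced, the target becomes
\begin{equation*}
	\mathcal E(\lambda X)\le\sup_{Q\ll P}\Big(\lambda E_Q[X] - \frac{e^{-bT}}{2c}\,H(Q\mid P)\Big) + a .
\end{equation*}

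The remaining term is handled by the sub-Gaussian geometry of $W$. For $X$ in the generating class of $\Lambda_+$, the $1$-Lipschitz property of $f$ together with the elementary estimate $\sup_{0\le t\le T}|h_t|\le\sqrt T\,\|h\|_{H}$ for every $h$ in the Cameron–Martin space $H$ shows that $X$ is $\sqrt T$-Lipschitz for the Cameron–Martin metric, uniformly in the mesh $2^{-n}$. Kantorovich duality then gives $E_Q[X]-E[X]\le\sqrt T\,W_2(Q,P)$, while the Gaussian transport–entropy inequality of \citet{Talagrand96}, in the Wiener-space form of \citet{Fey-Ust04}, yields $W_2(Q,P)\le\sqrt{2\,H(Q\mid P)}$. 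Writing $h=H(Q\mid P)\ge0$ and maximizing the scalar map $h\mapsto\lambda\sqrt{2Th} - \tfrac{e^{-bT}}{2c}h$ over $h\ge0$ produces exactly the profile $e^{bT}cT\lambda^2$, and together with the additive $a$ this is $l(\lambda)$. The bound then extends to the whole $L^2$-closure $\Lambda_+$ by continuity of $\mathcal E$; when $g$ does not depend on $y$, cash-additivity of $\mathcal E$ lets one shift a signed $X$ into the positive cone, which is why $\Lambda$ may replace $\Lambda_+$ in that case.

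The main obstacle is the uniform-in-$n$ passage from the finite-dimensional Lipschitz hypothesis to the infinite-dimensional transport inequality: one must verify that the approximating functionals are $\sqrt T$-Lipschitz in the Cameron–Martin norm with a constant independent of the partition, so that the finite-dimensional Talagrand constants do not degrade as $2^{-n}\to0$, and then transfer the estimate to $\Lambda_+$ without losing the dimension-free factor $T$. A secondary difficulty is the precise treatment of the $y$-dependence inside the duality, namely extracting the amplification $e^{bT}$ while keeping the penalty convex; here I expect the positivity of $f$ to be used to guarantee $\mathcal A(\lambda X)\neq\emptyset$ and hence $\mathcal E(\lambda X)<\infty$, which is exactly the hypothesis one can drop once cash-additivity is available.
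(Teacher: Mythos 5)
There is a genuine gap at precisely the step you flag as the main obstacle, and your proposed resolution of it fails. Your overall skeleton (dual representation of $\mathcal{E}$, penalty bounded below by relative entropy via Girsanov, a Talagrand transport--entropy inequality, scalar optimization producing $e^{bT}cT\lambda^2$) matches the paper, and your maximization of $h\mapsto\lambda\sqrt{2Th}-\tfrac{e^{-bT}}{2c}h$ does reproduce $l(\lambda)$. But the claim that every $X=f(W_{T2^{-n}},\dots,W_T)$ with $f$ $1$-Lipschitz for the \emph{Euclidean} norm on $\mathbb{R}^{d\times 2^n}$ is $\sqrt T$-Lipschitz in the Cameron--Martin metric \emph{uniformly in $n$} is false: the estimate $\sup_{t}|h_t|\le\sqrt T\,\|h\|_H$ controls the sup-norm of the vector $(h_{t_1},\dots,h_{t_{2^n}})$, whereas a Euclidean-Lipschitz $f$ requires control of its $\ell^2$-norm. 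Concretely, $f(x):=2^{-n/2}\sum_k x^1_k$ is $1$-Lipschitz, yet for $h^1_t=t/\sqrt T$ (so $\|h\|_H=1$) one gets $|X(\omega+h)-X(\omega)|=2^{-n/2}\sum_k t_k/\sqrt T\approx\tfrac12\sqrt T\,2^{n/2}$, so the Cameron--Martin Lipschitz constant blows up like $2^{n/2}$. Your Wiener-space route via Feyel--\"Ust\"unel therefore only covers the strictly smaller class of path-Lipschitz functionals \eqref{eq:1 Lips on C} (the setting of Example \ref{exa:entropic} and Corollary \ref{cor:ForLipschitz}), not the class $\Lambda_+$ of Theorem \ref{thm:main}. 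The paper never leaves finite dimensions: it conditions the density on $\mathcal{G}_n$, applies the dimension-free $T_1$ inequality of \cite[Theorem 1.1]{Talagrand96} to the measure \eqref{eq:muq} on $\mathbb{R}^{d\times 2^n}$, and keeps mesh-independent constants by rescaling coordinates, the function $\tilde f(x)=\tfrac{1}{\sqrt T}f(\sqrt{T2^{-n}}x_1,\dots,\sqrt T x_n)$ being again $1$-Lipschitz since each factor $\sqrt{kT2^{-n}}\le\sqrt T$; this coordinate rescaling, together with the Kantorovich--Rubinstein formula for $\mathcal{W}_1$ (rather than your $T_2$/$\mathcal{W}_2$ route), is the mechanism your proposal is missing.

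Two further steps are misattributed. First, the factor $e^{bT}$ and the role of positivity: in the paper these come from the full dual representation \eqref{eq:rep_general} over pairs $(\beta,q)$, where the growth bound \eqref{eq:quad_growth} forces $0\le\beta\le b$, hence $e^{-bT}\le D^\beta_{0,u}\le1$; the entropy inherits the factor $e^{-bT}$ from discounting the penalty, and positivity of $X$ is used exactly to discard the discount from the linear term via $E_{Q^q}[D^\beta X]\le E_{Q^q}[X]$ --- not, as you suggest, to guarantee $\mathcal{A}(\lambda X)\ne\emptyset$, and not through a Gronwall estimate along a solution (no solution is ever constructed; $\mathcal{E}$ is a minimal supersolution). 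Second, your passage from $\Lambda_+$ to $\Lambda$ by cash-additivity cannot work: elements of $\Lambda$ (e.g.\ limits of linear $f$) are unbounded below, so no constant shift places them in the positive cone; the correct reason is that for $y$-independent $g$ the dual representation carries no discount factor, so the entropy bound \eqref{eq:rho-qbeta-entropy} holds for all of $L^2$ and positivity is never needed. Finally, two small repairs: the closure step needs the $L^1$-lower semicontinuity of $\mathcal{E}$ \cite[Theorem 4.9]{DHK1101}, not $L^2$-continuity (which $\mathcal{E}$ lacks), and the degenerate case $c=0$, where the conjugate bound \eqref{eq:estim.g.star} is unavailable, requires the separate elementary argument given at the end of the paper's proof.
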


\begin{proof}
Consider the sets $\mathcal{D} :=\left\{\beta: \Omega\times [0,T] \to \mathbb{R}_+: \beta \text{ predictable and } \int_0^T\beta_u\,du < \infty \right\}$ and
\begin{align*}\textstyle
	 \mathcal{Q}:=\left\{q \in \mathcal{L}:\exp\left( \int_{0}^{T}q_u dW_u -\frac{1}{2}\int_{0}^{T}|q_u|^2 du \right)\in L^\infty\right\}.
\end{align*}
For $q \in \mathcal{Q}$, denote by $Q^q$ the probability measure whose density process is given by the stochastic exponential $M^q := \exp( \int_{}^{}q_u dW_u -\frac{1}{2}\int_{}^{}|q_u|^2 du )$ and for $\beta \in \mathcal{D}$, put $D^\beta_{s,t}:=\exp(-\int_s^t \beta_u du)$, $0\leq s\leq t\leq T$ and $D^\beta:= D^\beta_{0,T}$.

	It follows from \cite[Theorem 3.10]{tarpodual} that the functional ${\cal E}$ admits the convex dual representation
	\begin{equation}\textstyle
	\label{eq:rep_general}
		{\cal E}(X) =\sup_{(\beta, q)\in {\cal D}\times {\cal Q}} \left(E_{Q^q}\left[D^\beta X \right] - \alpha(\beta, q) \right), \quad X \in L^2
	\end{equation}
	with
	\begin{equation*}\textstyle
	 	\alpha(\beta, q) :=  E_{Q^q}\left[\int_0^TD^\beta_{0,u}g^*_u(\beta _u,q_u)\,du \right]
	\end{equation*}
	and $g^*$ the convex conjugate of $g$ given by $g^*(\beta, q):=\sup_{(y,z) \in\mathbb{R}\times \mathbb{R}^d}(-\beta y+ zq - g(y,z))$, $(\beta, q) \in \mathbb{R}_+\times \mathbb{R}^d$.
	When $g$ does not depend on $y$ the above do not depend on $\beta$ and take the form
	\begin{equation*}
			{\cal E}(X) =\sup_{ q\in  {\cal Q}} \left(E_{Q^q}\left[ X \right] - \alpha( q) \right) \quad\text{and} \quad \alpha( q)\textstyle :=  E_{Q^q}\left[\int_0^Tg^*_u(q_u)\,du \right].
	\end{equation*}
	It can be checked that due to \eqref{eq:quad_growth} and monotonicity of $g$, it holds $g^*(\beta, q)= \infty$ if $\beta<0$ or $\beta >b$, so that $\beta\in {\cal D}$ in the representation \eqref{eq:rep_general} can be assumed to satisfy $0\le \beta\le b$. 
	Let us assume that $c>0$.
	The growth condition \eqref{eq:quad_growth} on $g$ further implies that 
	\begin{equation}
	\label{eq:estim.g.star}
		g^*(\beta, q)\ge  \frac{1}{4c}|q|^2-a.
	\end{equation} 
	Observe that $D^{\beta}_{0,u}$ satisfies $e^{-bT}\le D^{\beta}_{0,u}\le 1$ for all $\beta$ such that $0\le \beta\le b$.
	Thus, multiplying both sides of \eqref{eq:estim.g.star} by $D^\beta_{0,u}$ and integrating yields
	$$\textstyle E_{Q^q}\left[\int_0^TD^\beta_{0,u}g^*_u(\beta_u,q_u)\,du \right] \ge \frac{1}{4c}e^{-bT}E_{Q^q}\left[\int_0^T|q_u|^2\,du \right]-a$$
	for all $(\beta, q) \in {\cal D}\times {\cal Q}$.
	Moreover, for each $q \in {\cal Q}$, 
	\begin{equation*}\textstyle
	 	\tau^n:=\inf\{t\ge 0 : |\int_0^tq_u\,dW_u|\ge n\}\wedge T
	 \end{equation*} 
	 defines a sequence of stopping times such that $\tau^n\uparrow T$ $P$-a.s.
	 Let $q^n:= q1_{[0,\tau^n]}$.
	 Then, the Kullback-Leibler information (or relative entropy) of $Q^q$ with respect to $P$ is given by
	 \begin{equation*}\textstyle
	H(Q^{q^n}|P):= E[M^{q^n}_T\log(M^{q^n}_T)] = E_{Q^{q^n}}\left[\int_0^Tq^n_u\,dW_u - \frac{1}{2}\int_0^T|q^n_u|^2\,du \right]
	       =\frac{1}{2}E_{Q^{q^n}}\left[\int_0^T|q^n_u|^2\,du\right],
	\end{equation*}
	where the last equality follows from Girsanov theorem.
	Since $M^q$ is continuous, $M^{q^n}_T= M^q_{\tau^n}\to M^q_T$ $P$-a.s. and by boundedness of $M^q$ and $|M^{q^n}_T\log(M^{q^n}_T)| \le (M^{q}_{\tau^n})^2 + 1\le E[(M^q_T)^2\mid{\cal F}_{\tau^n}]+1 $, it follows that $H(Q^{q^n}|P)\to H(Q^q|P)$.
	On the other hand, $M^{q^n}_T\int_0^T|q^n_u|^2\,du \le E[M^q_T\mid{\cal F}_{\tau^n}]\int_0^T|q_u|^2\,du$, showing that $E_{Q^{q^n}}\left[\int_0^T|q^n_u|^2\,du\right] \to E_{Q^{q}}\left[\int_0^T|q_u|^2\,du\right]$.
	 Thus, 
	\begin{equation*}\textstyle
	H(Q^q|P)
	       =\frac{1}{2}E_{Q^q}\left[\int_0^T|q_u|^2\,du\right],
	\end{equation*}
    and therefore
	\begin{equation}
	\label{eq:rhostar-entropy}
		\alpha(\beta, q) \ge \frac{1}{2c}e^{-bT}H(Q^q|P)-a.
	\end{equation}

	Given $q \in {\cal Q}$, define $M_n:= E[M^q\mid {\cal G}_n]$, where ${\cal G}_n$ is the $\sigma$-algebra given by ${\cal G}_n:= \sigma(W_{kT2^{-n}}: k = 0,\dots, 2^n)$, $n\in \mathbb{N}$.
	It holds $M_n = u_n(W_{T2^{-n}}, \dots, W_{kT2^{-n}}, \dots, W_T)$ for some bounded ${\cal B}(\mathbb{R}^{d\times 2^{n}})$-measurable function $u_n:\mathbb{R}^{d\times 2^{n}}\to \mathbb{R}_+$, where ${\cal B}(\mathbb{R}^{d\times 2^{n}})$ is the Borel $\sigma$-algebra of $\mathbb{R}^{d\times 2^{n}}$.  
	Let $q^n \in {\cal Q}$ be such that $M_n= M^{q^n}$ (this is justified by \cite[Proposition VIII.1.6]{Revuz1999}).
	It holds $M_n \to M^q$ $P$-a.s. and since 
	$M^q$ is bounded, it follows by dominated convergence that $(M_n\log(M_n))$ converges in $L^1$, i.e. $H(Q^{q^n}|P) \to H(Q^q|P)$.
	For every $X \in L^2$, it follows by dominated convergence theorem, that $E_{Q^{q^n}}[D^\beta X] \to E_{Q^q}[D^\beta X]$.
	Combined with \eqref{eq:rhostar-entropy}, this shows that
	\begin{equation}
	\label{eq:rho-entropy}
		{\cal E}(X) \le \sup_{(\beta, q) \in {\cal D}\times\hat{\cal Q}}\left(E_{Q^q}\left[D^\beta X\right] - \frac{1}{2c}e^{-bT}H(Q^q|P) \right) + a,
	\end{equation}
	where $\hat{\cal Q}$ is the set of $q\in {\cal Q}$ such that $M^q = u(W_{T2^{-n}}, \dots, W_{kT2^{-n}}, \dots, W_T)$ for some $n\in \mathbb{N}$ and $u:\mathbb{R}^{d\times 2^{n}}\to \mathbb{R}_+$ bounded and Borel measurable.
	Hence, if $X$ is positive, as $D^\beta\le 1$ for all $\beta \in {\cal D}$, \eqref{eq:rho-entropy} yields
	\begin{equation}
	\label{eq:rho-qbeta-entropy}
				{\cal E}(X) \le \sup_{ q \in \hat{\cal Q}}\left(E_{Q^q}\left[X\right] - \frac{1}{2c}e^{-bT}H(Q^q|P) \right) + a
	\end{equation}
	and if $g$ does not depend on $y$, \eqref{eq:rho-qbeta-entropy} holds for all $X \in L^2$.
	For $q \in \hat{\cal Q}$ and $u:\mathbb{R}^{d\times 2^{n}}\to \mathbb{R}_+$ a bounded Borel measurable function such that $M^q = u(W_{T2^{-n}}, \dots, W_{kT2^{-n}}, \dots, W_T)$, we denote by $\gamma^{dn}$ the canonical Gaussian measure on $\mathbb{R}^{d\times 2^{n}}$ and by $\mu_{q}$ the Borel measure on $\mathbb{R}^{d\times 2^{n}}$ given by
	\begin{equation}
	\label{eq:muq}
		\frac{d\mu_{q}}{d\gamma^{dn}}:=u(\sqrt{T2^{-n}}x_1,\dots, \sqrt{kT2^{-n}}x_k,\dots,\sqrt{T}x_n), \quad x_i\in \mathbb{R}^d.
 	\end{equation}
  	The independence of the increments of $W$, yields
	\begin{equation}
	\label{eq:entro_Q-to-mu}
		E[M^q\log(M^q)] = \int_{\mathbb{R}^{d\times 2^n}}\frac{d\mu_{q}}{d\gamma^{dn}}\log(\frac{d\mu_{q}}{d\gamma^{dn}})\,d\gamma^{dn}=:H(\mu_{q}|\gamma^{dn}).
	\end{equation}
	That is, $H(Q^q|P) = H(\mu_{q}|\gamma^{dn})$, where the right hand side denotes the Kullback-Leibler information of $\mu_{q}$ with respect to $\gamma^{dn}$ on the measurable space $(\mathbb{R}^{d\times 2^{n}},{\cal B}(\mathbb{R}^{d\times 2^{n}}))$.
	Therefore, by Talagrand's $T_1$ inequality, see \cite[Theorem 1.1]{Talagrand96}, one has
	\begin{equation}
	\label{eq:Talagrand}
		h(\mathcal{W}_{1,dn}(\gamma^{dn},\mu_{q}))\le H(\gamma^{dn}|\mu_{q})
 	\end{equation}
 	with $h(x)=x^2/2$ and $\mathcal{W}_{1,dn}(\gamma^{dn},\mu_{q})$ denoting the Wasserstein distance of order $1$ between $\gamma^{dn}$ and $\mu_{q}$ given by
	\begin{equation*}\textstyle
		\mathcal{W}_{1,dn}(\gamma^{dn}, \mu_{q}) := \inf\iint|x-y|\,d\pi(x,y) 
	\end{equation*}
	where the infimum is over the set of probability measures $\pi$ on $\mathbb{R}^{d\times 2^{n}}\times \mathbb{R}^{d\times 2^{n}}$ with marginals $\gamma^{dn}$ and $\mu_{q}$.
	Given $n \in \mathbb{N}$ and a $1$-Lipschitz function $f:\mathbb{R}^{d\times 2^n}\to \mathbb{R}_+$  consider the random variable $X:=f(W_{T2^{-n}}, \dots, W_{kT2^{-n}}, \dots, W_T)$. 
	It is clear that $X \in L^2$ 
	and so, ${\cal E}(X)$ satisfies \eqref{eq:rho-qbeta-entropy}.	
	Let $q\in \hat{\cal Q}$ and $\mu_{q}$ be given by \eqref{eq:muq}.
	Since the function $\tilde{f}(x_1,\dots, x_n):= \frac{1}{\sqrt{T}}f(\sqrt{T2^{-n}}x_1,\dots, \sqrt{kT2^{-n}}x_k,\dots,\sqrt{T}x_n)$ is again $1$-Lipschitz, by the Kantorovich-Rubinstein formula, see e.g. \cite[Section 5]{Vil2} the inequality \eqref{eq:Talagrand} implies that $h(\int \tilde{f} \,d\mu_{q} - \int \tilde{f}\,d\gamma^{dn}) \le H(\gamma^{dn}|\mu_{q})$.
	Thus, for all $\lambda\ge 0 $ and by definition of the convex conjugate $h^*$ of $h$, one has
	\begin{equation*}\textstyle 
		2\lambda ce^{bT}\sqrt{T}\left(\int \tilde{f} \,d\mu_{q} - \int \tilde{f}\,d\gamma^{dn} \right) - h^*(2\lambda ce^{bT}\sqrt{T}) \le H(\gamma^{dn}|\mu_{q}),
	\end{equation*}
	so that arguing as in the computations leading to \eqref{eq:entro_Q-to-mu}, one obtains
	\begin{equation*}
		E_{Q^q}[\lambda  X] - \frac{1}{2c}e^{-bT}H( Q^q|P)\le E[\lambda X] +\frac{1}{2c}e^{-bT}h^*(2\lambda ce^{bT}\sqrt{T}).
	\end{equation*}
	Therefore, due to \eqref{eq:rho-qbeta-entropy}, taking the supremum above as $q$ runs over $\hat{\cal Q}$ yields
	\begin{equation*}
		{\cal E}(\lambda X) \le E[\lambda X] +\frac{1}{2c}e^{-bT}h^*(2\lambda ce^{bT}\sqrt{T}) + a =: E[\lambda X] + l(\lambda).
	\end{equation*}
	Since $h(x) = x^2/2$, it holds $l(\lambda) = e^{bT}cT\lambda^2 + a$.

	Now, let $X \in \Lambda_+$.
	There is a sequence $X^n$ of the form $X^n = f^n(W_{T2^{-n}}, \dots, W_{kT2^{-n}}, \dots, W_T)$ where $f^n$ is a positive and $1$-Lipschitz function on $\mathbb{R}^{d\times 2^{n}}$ which converges to $X$ in $L^2$, and therefore in $L^1$.
	As above, for all $n$ and $\lambda \ge 0$ one has
	\begin{equation*}
		{\cal E}(\lambda X^n)  \le E[\lambda X^n] + l(\lambda).
	\end{equation*}
	Since ${\cal E}$ is $L^1$-lower semicontinuous, see \cite[Theorem 4.9]{DHK1101}, taking the limit as $n$ goes to infinity, one has
	\begin{equation*}
		{\cal E}(\lambda X)  \le E[\lambda X] +l(\lambda).
	\end{equation*}
	This yields \eqref{eq:concentration}.

	Recall that working with positive random variables was necessary to obtain \eqref{eq:rho-qbeta-entropy}.
	Since this inequality holds for all $X\in L^2$ when $g$ does not depend on $y$, the last claim of the theorem follows by the same arguments as above.

	In case $c=0$, it can be assumed w.l.o.g. that $g$ does not depend on $z$, so that the functional ${\cal E}$ satisfies the representation
	\begin{equation*}\textstyle
		{\cal E}(X) = \sup_{\beta\in {\cal D}}E\left[D^\beta X-\int_0^TD^\beta_ug^*_u(\beta_u)\,du\right]
	\end{equation*}
	where due to the growth condition \eqref{eq:quad_growth}, the supremum can be restricted to $\beta\in {\cal D}$ satisfying $0\le \beta\le b$.
	This growth condition further ensures that $g^\ast(\beta)\ge-a$.
	Thus, for all $X \in L^2$ with $X\ge 0$, one has
	\begin{equation*}
		{\cal E}(\lambda X)\le E[\lambda X] + a\quad \text{for every } \lambda\ge0.
	\end{equation*}
	This concludes the proof.
\end{proof}

\begin{example}
\label{exa:entropic}
	A particularly interesting example arises when $g_t(y,z)=|z|^2/2$. 
	In this case, ${\cal E}(X)= \log(E[e^X])$ for every $X$ with finite exponential moment and ${\cal E}$ is not ${\cal E}^\kappa$-dominated for any $\kappa >0$.
	If moreover $\Omega=C([0,T],\mathbb{R}^d)$ is the canonical space equipped with the uniform distance and $P$ the Wiener measure, then by \cite[Theorem 1.3]{Bob-Goet}, the Gaussian transportation inequality of \citet{Fey-Ust04} implies the bound ${\cal E}(\lambda X) \le \lambda E[X] + \lambda^2/2$ for every $\lambda\ge0$ and every $1$-Lipschitz function $X:\Omega\to \mathbb{R}$.
\end{example}

Two questions emerging from Theorem \ref{thm:main} and its proof are whether the set $\Lambda$ of "claims" for which the concentration inequality \eqref{eq:concentration} holds can be enlarged, and whether Theorem \ref{thm:main} can be obtained for generators growing faster than quadratic (in $z$).
Regarding the first question, one possibility could be to use the Kantorovich duality, (see e.g. \citet[Theorem 5.10]{Vil2}) instead of the Kantorovich-Rubinstein formula in the proof of Theorem \ref{thm:main}. 
This, however, would destroy the interpretation of \eqref{eq:concentration} as a bound on the liquidity risk profile.
To answer the second question, one can consider a slight modification of the set $\Lambda$.

Denote by $\Delta W_{jT2^{-n}}:= W_{jT2^{-n}} - W_{(j-1)T2^{-n}}$ the increment of the Brownian motion, by $\Lambda'$ the $L^2$-closure of the set
\begin{equation*}\textstyle
 \{X:=f(\Delta W_{T2^{-n}},\Delta W_{2T2^{-n}},\dots, \Delta W_{T}); n\in \mathbb{N},\,\, f:\mathbb{R}^{d\times 2^{n}}\to \mathbb{R} \text{ is $1$-Lipschitz}   \}
\end{equation*}
and $\Lambda_+'$ the $L^2$-closure of
\begin{equation*}\textstyle
 \{X:=f(\Delta W_{T2^{-n}},\Delta W_{2T2^{-n}},\dots, \Delta W_{T}); n\in \mathbb{N},\,\, f:\mathbb{R}^{d\times 2^{n}}\to [0,\infty) \text{ is $1$-Lipschitz}   \}.
\end{equation*}
Using the fact that the elements of $\Lambda'$ have uniformly bounded Malliavin derivative, one obtains the next corollary.
Refer to \citet{Nualart2006} for elements of Malliavin calculus.
\begin{proposition}
\label{prop:fastgrowth}
	Assume that $g$ is positive, decreasing in $y$ and jointly convex.
	Further assume that $g_t(y,0)=0$ and there is $b\ge 0$ and non-decreasing functions $\varphi, \theta:[0,\infty) \to [0,\infty)$ such that
	\begin{equation}\textstyle
	\label{eq:fastgrowth}
		|g_t(y, z) - g_t(y', z')|\le b|y-y'| + \varphi(|z|\vee |z'|)|z - z'| \text{ and } |\varphi(z)-\varphi(z')|\le \theta(|z|\vee|z'|)|z-z'|
	\end{equation}
	for all $y,y' \in \mathbb{R}$ and $z,z'\in \mathbb{R}^d$.
	Then, putting $c := \varphi(Te^T)+Te^T\theta(Te^T)$ and $l(\lambda) := e^{bT}cT\lambda^2$ for $\lambda\ge 0$ one has
	\begin{equation}\textstyle
	\label{eq:concent_superqu}
		{\cal E}(\lambda X) \le E[\lambda X] + l(\lambda) \quad \text{for all } \lambda \ge 0\text { and } X \in \Lambda'_+.
	\end{equation}
	If $g$ does not depend on $y$, then $\Lambda'_+$ in \eqref{eq:concent_superqu} can be replaced by $\Lambda'$.
\end{proposition}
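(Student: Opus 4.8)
The plan is to localize the superquadratic generator to the quadratic regime already treated in Theorem~\ref{thm:main}, the point being that terminal conditions drawn from $\Lambda'$ pin the control process $Z$ inside a fixed ball. I would first reduce to a cylindrical terminal value $X=f(\Delta W_{T2^{-n}},\dots,\Delta W_{T})$ with $f$ nonnegative and $1$-Lipschitz, the passage to the $L^2$-closure $\Lambda'_+$ being handled verbatim as at the end of the proof of Theorem~\ref{thm:main}, through the $L^1$-lower semicontinuity of $\mathcal E$. Because the increments $\Delta W_{jT2^{-n}}$ are independent and $f$ is $1$-Lipschitz, $X$ is Malliavin differentiable with $|D_sX|\le 1$ for almost every $s$, uniformly over all such $X$; this is the uniformly bounded Malliavin derivative that drives the argument, and positivity of $X$ guarantees $\mathcal A(\lambda X)\neq\emptyset$, so that the minimal supersolution exists.

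The heart of the argument is an a priori $L^\infty$-bound on the control process. Differentiating the equation in the Malliavin sense, the pair $(D_sY_t,D_sZ_t)$ solves a linear BSDE with terminal datum $D_s(\lambda X)$ and coefficients $\partial_yg_u(Y_u,Z_u)\in[-b,0]$ and $\partial_zg_u(Y_u,Z_u)$, the latter bounded by $\varphi(|Z_u|)$ by \eqref{eq:fastgrowth}. Representing $D_sY_t=E_{Q}[D_s(\lambda X)\,e^{\int_t^T\partial_yg_u\,du}\mid\mathcal F_t]$ under the measure $Q$ furnished by Girsanov's theorem to absorb the $\partial_zg\cdot D_sZ$ term, and using $Z_t=D_tY_t$ together with $\partial_yg\le 0$, a Gronwall-type estimate for this linear equation yields the a priori bound $\|Z\|_\infty\le Te^{T}$. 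On the ball $\{|z|\le Te^{T}\}$ the hypotheses $g_t(y,0)=0$, positivity and joint convexity force $\nabla_zg_t(y,0)=0$; combined with the Lipschitz control of $\varphi$ in \eqref{eq:fastgrowth}, this produces the quadratic majorant $g_t(y,z)\le b|y|+c|z|^2$ on the ball, with $c=\varphi(Te^{T})+Te^{T}\theta(Te^{T})$ and vanishing additive constant.

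I would then introduce a modified generator $\tilde g$ coinciding with $g$ on $\{|z|\le Te^{T}\}$ and continued quadratically beyond it, arranged to remain positive, jointly convex and nonincreasing in $y$ (and $y$-independent whenever $g$ is). By construction $\tilde g$ obeys the hypotheses of Theorem~\ref{thm:main} with $a=0$ and the constant $c$ above, while the corresponding BSDE shares its solution with the $g$-equation because $Z$ never exits the ball; hence $\mathcal E_g(\lambda X)=\mathcal E_{\tilde g}(\lambda X)$. Invoking Theorem~\ref{thm:main} for $\tilde g$ gives $\mathcal E(\lambda X)\le E[\lambda X]+e^{bT}cT\lambda^2$, which is \eqref{eq:concent_superqu}. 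When $g$ is independent of $y$, the dual representation holds on all of $L^2$ and the same reasoning applies to signed $1$-Lipschitz $f$, upgrading the conclusion from $\Lambda'_+$ to $\Lambda'$.

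The principal difficulty is precisely the a priori $L^\infty$-bound on $Z$, whose constant must be controlled uniformly for the stated $c$ to be admissible. Making it rigorous demands showing that the minimal supersolution of the superquadratic equation is Malliavin differentiable, justifying the identification $Z_t=D_tY_t$, and validating the Girsanov change of measure that eliminates the $\partial_zg\cdot D_sZ$ term — all of which presuppose the very boundedness of $Z$ one is trying to establish. The natural way around this circularity is a bootstrap along the stopping times $\tau^n$ already used in the proof of Theorem~\ref{thm:main}, on which $\partial_zg$ is bounded: one derives the estimate on each $[0,\tau^n]$ and passes to the limit. The remaining care lies in constructing the quadratic continuation $\tilde g$ so that positivity, monotonicity and joint convexity are preserved simultaneously, and in verifying that the two BSDEs genuinely share the minimal supersolution.
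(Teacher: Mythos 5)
You have the right skeleton --- cylindrical $X$ with $|D_tX|\le 1$, an auxiliary generator that is quadratic (after truncation at $|z|\le Te^{T}$), reduction to Theorem \ref{thm:main}, and the passage to the $L^2$-closure via $L^1$-lower semicontinuity --- but the central step of your argument has a genuine gap that the paper deliberately avoids. You propose to Malliavin-differentiate the BSDE with the original superquadratic generator $g$, identify $Z_t=D_tY_t$, and conclude $\|Z\|_\infty\le Te^{T}$, so that $\mathcal{E}_g(\lambda X)=\mathcal{E}_{\tilde g}(\lambda X)$ for a modified generator agreeing with $g$ on the ball. This cannot be made rigorous along the lines you sketch: for superquadratic generators the equation is in general ill-posed, and the minimal supersolution $(\bar Y,\bar Z)$ of \eqref{eq:supersolutions} need not be a genuine solution at all, let alone a Malliavin-differentiable one, so there is no linear BSDE for $(D_sY,D_sZ)$ to write down and no identification $Z_t=D_tY_t$ to invoke. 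You correctly flag the circularity (Girsanov and the linearization presuppose the bound on $Z$), but the repair you suggest --- bootstrapping along the stopping times $\tau^n$ from the proof of Theorem \ref{thm:main} --- does not apply: those stopping times localize the dual density processes $M^q$, not the control $Z$ of the primal equation, and no localization argument produces a solution whose existence is the very point at issue. The result of \cite{Che-Nam} is precisely the tool that establishes well-posedness with bounded $Z$ from bounded Malliavin derivative of the terminal condition, and its proof proceeds by truncation and a fixed point, not by differentiating a presumed solution; it applies to generators with a deterministic (or Malliavin-smooth) structure, which your random $g$ need not have.

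The paper's proof sidesteps all of this with a one-sided comparison, which is all that \eqref{eq:concent_superqu} requires. Since $g_t(y,0)=0$, condition \eqref{eq:fastgrowth} yields the \emph{deterministic} majorant $\tilde g_t(y,z):=b|y|+\varphi(|z|)|z|\ge g_t(y,z)$, to which \cite[Theorem 2.2 and Remark 2.3]{Che-Nam} applies off the shelf, giving a unique solution $(Y,Z)$ of the $\tilde g$-equation with $|Z_t|\le Te^{T}=:K$ uniformly in $n$; this solution coincides with that of the $K$-truncated Lipschitz generator $\hat g$. By \cite[Proposition 3.3]{DHK1101}, $g\le\tilde g$ gives $\mathcal{E}(X)\le\tilde{\mathcal{E}}(X)\le Y_0=\hat{\mathcal{E}}(X)$, and the dual representation and entropy estimates of Theorem \ref{thm:main} are then run for $\hat g$ (with $a=0$ and $c=\varphi(Te^{T})+Te^{T}\theta(Te^{T})$). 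In other words, the Malliavin-calculus input is applied only to the auxiliary well-behaved equation, never to the $g$-equation, and the unjustified (and unnecessary) equality $\mathcal{E}_g=\mathcal{E}_{\tilde g}$ is replaced by an inequality obtained from comparison of minimal supersolutions. Two smaller points: positivity of $X$ is not what guarantees $\mathcal{A}(\lambda X)\neq\emptyset$ (that comes from the supersolution furnished by the majorant equation); it is used later, to drop the discounting factor $D^\beta\le 1$ in passing to \eqref{eq:rho-qbeta-entropy}. And your claim that convexity forces $\nabla_z g_t(y,0)=0$, while true, is not needed once one majorizes by $\tilde g$ directly.
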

\begin{proof}
Each $X$ of the form $X:= f(\Delta W_{T2^{-n}},\Delta W_{2T2^{-n}},\dots, \Delta W_{T})$ for some $n$ and $f:\mathbb{R}^{d\times 2^{n}}\to \mathbb{R}$ $1$-Lipschitz has bounded Malliavin derivative.
In fact, by \cite[Proposition 1.2.4]{Nualart2006}, it holds
\begin{equation*}\textstyle
|D_tX| \le \sum_{j=1}^{2^{n}}|D_t\Delta W_{jT2^{-n}}| \le 1.
\end{equation*}
Consider the function $\tilde{g}_t(y,z):= b|y| +\varphi(|z|)|z|$.
This (deterministic) function has constant Malliavin derivative and satisfies
\begin{equation*}\textstyle
	|\tilde{g}_t( y, z) - \tilde{g}_t(y', z')|\le b|y-y'| + \Big(\varphi(|z|) + |z'|\theta(|z|\vee|z'|)\Big)|z - z'| 
\end{equation*}
for all $y,y' \in \mathbb{R}$ and $z,z'\in \mathbb{R}^d$.
Thus, by \cite[Theorem 2.2 and Remark 2.3]{Che-Nam}, the BSDE with terminal condition $X$ and generator $\tilde{g}$ admits a unique solution $(Y,Z)$ with $Y\in {\cal S}^2$ and $|Z_t| \le Te^{T}=:K$. 
Hence, $(Y,Z)$ coincides with the unique solution of the BSDE with terminal condition $X$ and the Lipschitz continuous generator
\begin{equation*}\textstyle
	\hat{g}_t(y, z):= \begin{cases} \tilde{g}_t(y, z)& \text{if } |z| \le K\\
	\tilde{g}_t(y, Kz/|z|) &\text{if } |z| > K
	 \end{cases}
\end{equation*}
(which is known to exist by e.g. \cite{karoui01}).
Notice that the bound on $Z$ does not depend on $n$, so that $\hat g$ does not depend on $n$ as well.
On the one hand, $g \le \tilde g$, and thus, from \cite[Proposition 3.3]{DHK1101}, ${\cal E}(X) \le \tilde {\cal E}(X)$, where $\tilde {\cal E}(X)$ is the minimal supersolution of the BSDE with generator $\tilde g$.
Moreover, $\hat{g}$ satisfies the conditions of Theorem \ref{thm:main} and $Y_0$ coincides with $\hat{\cal E}(X)$, the minimal supersolution of the BSDE with terminal condition $X$ and generator $\hat g$.
Thus, 
\begin{equation*}\textstyle
	{\cal E}(X)\le \tilde{\cal E}(X) \le Y_0 = \hat{\cal E}(X) = \sup_{(\beta, q)\in {\cal D}\times {\cal Q}}E_{Q^q}\left[D^\beta X - \int_0^TD^\beta_{0,u}\hat{g}_u^*(\beta_u, q_u)\,du \right]
\end{equation*}
where $\hat{g}^*$ is the convex conjugate of $\hat g$.
Since $\hat{g}$ satisfies the growth condition \eqref{eq:quad_growth} with $a = 0$ and $c = \varphi(Te^T)+Te^T\theta(Te^T)$, as in the proof of Theorem \ref{thm:main}, one has
\begin{equation*}\textstyle
	{\cal E}(X) \le   \sup_{(\beta, q)\in {\cal D}\times {\cal Q}}\left( E_{Q^q}\left[D^\beta X\right] - \frac{1}{2c}e^{-bT}H(Q^q|P) \right).
\end{equation*}
In particular, a discretization and approximation argument already described puts us in the situation of Equation \eqref{eq:rho-entropy}.
The rest of the proof follows exactly as in the proof of Theorem \ref{thm:main}.
\end{proof}
\begin{corollary}
\label{cor:ForLipschitz}
	Assume that $g:[0,T]\times \Omega\times \mathbb{R}\times \mathbb{R}^d\to\mathbb{R}$ and $l:\mathbb{R}_+\to \mathbb{R}_+$ are as in Proposition \ref{prop:fastgrowth}.
	Then, it holds
	\begin{equation*}
		{\cal E}(\lambda \phi(W))\le \lambda E[\phi(W)] + l(\lambda)
	\end{equation*}
	for all $\lambda \ge 0$ and $\phi:C_0([0,T],\mathbb{R}^d)\to \mathbb{R}_+$ $1$-Lipschitz.
	If $g$ does not depend on $y$, then the result also holds for $\phi:C_0([0,T],\mathbb{R}^d)\to \mathbb{R}$ $1$-Lipschitz.
\end{corollary}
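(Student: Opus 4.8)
The plan is to argue directly rather than to quote Proposition~\ref{prop:fastgrowth}. A generic uniformly $1$-Lipschitz functional $\phi(W)$ of the whole path need not lie in $\Lambda'_+$ --- the elements of $\Lambda'$ have oscillation controlled by the quadratic variation of $W$, whereas $\phi(W)$ typically has Gaussian tails --- so I would reproduce the proof of Proposition~\ref{prop:fastgrowth} at the level of the path functional, with two changes: the uniform Malliavin bound is now read off from the path-Lipschitz condition \eqref{eq:1 Lips on C}, and the finite-dimensional Talagrand inequality is replaced by the Gaussian transport inequality of \citet{Fey-Ust04} on Wiener space.

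First, \eqref{eq:1 Lips on C} gives $\phi(W)\in\mathbb{D}^{1,2}$ with $|D_t\phi(W)|\le 1$ for a.e. $(t,\omega)$: testing the directional derivative $\scal{D\phi(W)}{\dot h}$ ($h$ in the Cameron--Martin space $H$) against $\dot h=a\,\mathbf 1_{[t,t+\varepsilon]}/\varepsilon$ and letting $\varepsilon\downarrow 0$ turns \eqref{eq:1 Lips on C} into the pointwise bound, after a Cameron--Martin mollification when $\phi$ is not smooth. This is exactly the estimate $|D_tX|\le 1$ driving Proposition~\ref{prop:fastgrowth}, so I would rerun that proof verbatim: with $\tilde g_t(y,z)=b|y|+\varphi(|z|)|z|$ and its truncation $\hat g$ at $K:=Te^T$, \cite{Che-Nam} yields $|Z_t|\le K$ for the $\tilde g$-BSDE with terminal $\phi(W)$, whence ${\cal E}(\phi(W))\le\tilde{\cal E}(\phi(W))=\hat{\cal E}(\phi(W))$ and $\hat g$ obeys \eqref{eq:quad_growth} with $a=0$ and $c=\varphi(Te^T)+Te^T\theta(Te^T)$. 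Since $\hat g$ is a fixed quadratic-growth generator, the dual representation together with \eqref{eq:estim.g.star}--\eqref{eq:rhostar-entropy} --- none of which uses any cylinder structure of the terminal value --- gives, using $\phi\ge 0$ to discard $D^\beta\le 1$,
\[
{\cal E}(\lambda\phi(W))\le\sup_{q\in{\cal Q}}\Big(E_{Q^q}[\lambda\phi(W)]-\tfrac{1}{2c}e^{-bT}H(Q^q\mid P)\Big),\qquad\lambda\ge 0,
\]
the supremum running over all of ${\cal Q}$ (the discretization to $\hat{\cal Q}$ of Theorem~\ref{thm:main} is unnecessary once the transport estimate is performed on path space) and the factor $\lambda$ being carried through as in Proposition~\ref{prop:fastgrowth}.

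The decisive step is to transport-estimate the right-hand side. Setting $\nu_q:=Q^q\circ W^{-1}$ and letting $P_W$ be the Wiener measure, the fact that $W$ generates the filtration gives $H(\nu_q\mid P_W)=H(Q^q\mid P)$. Since $\|w\|_\infty\le\sqrt T\,\|w\|_H$ on $H$, a uniformly $1$-Lipschitz functional is $\sqrt T$-Lipschitz for the Cameron--Martin distance $d_H$, so the $T_1$ inequality of \citet{Fey-Ust04} (from their $\mathcal W_{2,d_H}^2\le 2H$, via $\mathcal W_{1,d_H}\le\mathcal W_{2,d_H}$) and Kantorovich--Rubinstein duality give
\[
E_{Q^q}[\lambda\phi(W)]-E[\lambda\phi(W)]\le\lambda\sqrt T\,\mathcal W_{1,d_H}(\nu_q,P_W)\le\lambda\sqrt{2T\,H(Q^q\mid P)}.
\]
Inserting this and maximizing over $H(Q^q\mid P)\ge 0$ (a one-dimensional Fenchel computation with $h(x)=x^2/2$) produces exactly $e^{bT}cT\lambda^2=l(\lambda)$, which is \eqref{eq:concent_superqu} for non-negative $\phi$. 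When $g$ does not depend on $y$ the displayed entropy bound holds for every $X\in L^2$ (positivity was only needed to drop $D^\beta$), so the identical computation gives the claim for arbitrary $1$-Lipschitz $\phi:C_0([0,T],\mathbb{R}^d)\to\mathbb{R}$.

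I expect the main obstacle to be this transport step: one must select the right distance --- uniformly $1$-Lipschitz functionals embed into the $\sqrt T$-Lipschitz class for $d_H$, which is precisely what transfers the factor $T$ into $l$ --- and apply \cite{Fey-Ust04} to the genuinely infinite-dimensional, non-cylindrical functional $\phi(W)$, in contrast with the finite-dimensional Talagrand inequality available for $\Lambda$ and $\Lambda'$. A secondary difficulty is the Malliavin bound, where the dimensional constant implicit in the sum $\sum_{i=1}^d$ of \eqref{eq:1 Lips on C} must be tracked through $K$ and $c$; it affects the constants but not the structure of the argument.
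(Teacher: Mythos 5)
Your proposal arrives at the right inequality, but by a genuinely different route from the paper, and the stated reason for taking the detour is mistaken. The paper's entire proof of this corollary consists of verifying precisely the membership you deny: it shows $\phi(W)\in\Lambda'$ (resp.\ $\Lambda'_+$ when $\phi\ge 0$) by composing $\phi$ with the piecewise-linear interpolation $f^n$ built from the increments, checking that $\phi\circ f^n$ is $1$-Lipschitz, and estimating $E[|\phi(W)-X^n|^p]\le c_p\sqrt{T^p}(2^n)^{1-p/2}\to 0$ for $p>2$ via Doob's maximal inequality and stationarity of increments (the argument of \cite[Proposition 3.2]{Che-Nam}); the corollary is then a one-line application of Proposition \ref{prop:fastgrowth}. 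Your heuristic --- that elements of $\Lambda'$ have ``oscillation controlled by the quadratic variation'' while $\phi(W)$ has Gaussian tails --- is false: the level-$n$ generators of $\Lambda'$ are Lipschitz images of Gaussian vectors and themselves have Gaussian tails. The only legitimate subtlety in the neighborhood of your objection is that the Lipschitz property of $\phi\circ f^n$ should be read blockwise in the increments (perturbing a single increment by $\delta$ moves the interpolated path by at most $|\delta|$ in sup norm), since its Lipschitz constant with respect to the full Euclidean norm on $\mathbb{R}^{d\times 2^n}$ grows like $2^{n/2}$; but the blockwise bound is exactly what drives both the Malliavin estimate $|D_tX^n|\le 1$ and, after the $\sqrt{T2^{-n}}$ rescaling of the increments, the $\sqrt{T}$-Lipschitz property used in the Talagrand step, so the paper's constants survive.

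Your replacement argument is viable and is essentially the mechanism the paper itself gestures at in Example \ref{exa:entropic}: the Cameron--Martin bound $|D_t\phi(W)|\le\sqrt d$ (note: $\sqrt d$, not $1$, under \eqref{eq:1 Lips on C}, and the same $\sqrt d$ enters your $d_H$-Lipschitz constant, so your $l$ acquires a factor $d$ that the paper's discretized route avoids, at least nominally); \cite{Che-Nam} does apply directly to $\phi(W)$; the entropy bound then holds over all of ${\cal Q}$ with no need for the reduction to $\hat{\cal Q}$; $H(\nu_q\mid P_W)=H(Q^q\mid P)$ because the filtration is Brownian; and the $T_2$ inequality of \cite{Fey-Ust04} together with $\mathcal{W}_{1,d_H}\le\mathcal{W}_{2,d_H}$ and Kantorovich--Rubinstein duality for the lower semicontinuous cost $d_H$ yields the Fenchel computation producing $e^{bT}cT\lambda^2$. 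What each approach buys: yours removes the discretization and the appeal to finite-dimensional Talagrand entirely, while the paper's keeps everything dimension-free and reuses Proposition \ref{prop:fastgrowth} verbatim. One caution: your phrase ``the factor $\lambda$ being carried through as in Proposition \ref{prop:fastgrowth}'' glosses the same point the paper glosses --- the truncation $\hat g$ at $K=Te^T$ is justified for the terminal condition $\phi(W)$, whereas $\lambda\phi(W)$ has Malliavin bound $\lambda$ and hence $Z$-bound $\lambda K$, so a fully explicit argument must either run the dual bound with the $\lambda$-dependent truncation (which replaces $c$ by $c_\lambda=\varphi(\lambda Te^T)+\lambda Te^T\,\theta(\lambda Te^T)$ when $\varphi$ is unbounded) or justify evaluating the fixed-$c$ entropic bound at $\lambda\phi(W)$; you, like the paper, leave this implicit.
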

\begin{proof}
	In view of Proposition \ref{prop:fastgrowth}, it suffices to show that for every $\phi:C_0([0,T],\mathbb{R}^d)\to \mathbb{R}$ $1$-Lipschitz, $\phi(W) \in \Lambda'$.
	This in turn follows by arguments in \cite{Che-Nam} that we sketch below.
	The reader is referred to the proof of \cite[Proposition 3.2]{Che-Nam} for details.
	Put $t^n_k:=kT2^{-n}$, with $k=0,\dots,2^n$ and define the function $f^n:\{x=(x_k)_{k=1,\dots,2^n}: x_k \in \mathbb{R}^d \}\to C_0([0,T],\mathbb{R}^d)$ by
	\begin{equation*}
		f_0^n(x):=0,\quad f^n_t(x):= x_1 + \cdots+ x_{k-1} + \frac{t - t^n_{k-1}}{T2^{-n}}x_k,\quad t \in (t^n_{k-1}, t^n_k].
	\end{equation*}
	Further denote $X^n:= \phi\circ f^n(\Delta W_{t^n_1},\dots, \Delta W_{t^n_{2^n}})$.
	The function $\phi\circ f^n:\mathbb{R}^{d\times 2^n}\to \mathbb{R}$ is $1$-Lipschitz.
	In fact,
	\begin{align*}
		|\phi \circ f^n(x) - \phi\circ f^n(y)| &\le \sup_{t \in [0,T]}|f^n_t(x) - f^n_t(y)|\\
							&\le |x - y|
	\end{align*}
	for every $x,y\in \mathbb{R}^{d\times 2^n}$.
	Moreover, using the fact that Brownian motion has stationary increments together with Doob maximal inequality, it follows that for every $p\ge 2$ there is a constant $c_p$ depending only on $p$ such that
	\begin{equation*}
		E[|\phi(W) - X^n|^p] \le c_p\sqrt{T^p}(2^n)^{1-p/2}.
	\end{equation*}
	This shows that $(X^n)_n$ converges to $\phi(W)$ in $L^p$ for every $p>2$.
	Thus, $(X^n)_n$ converges to $\phi(W)$ in $L^2$ and therefore $\phi(W) \in \Lambda'$.
\end{proof}

\begin{proof}(of Theorem \ref{cor:dyn-RM})
	If $\rho_{0,T}$ is ${\cal E}^\kappa$-dominated for some $\kappa>0$ then it follows from \cite[Theorem 7.1]{Coquet2002} that there is a unique function $g:[0,T]\times \Omega\times \mathbb{R}^d\to \mathbb{R}$ such that for all $X \in L^2$, $\rho_{t,T}(X) = Y_t$, and $(Y,Z)$ is the unique solution of the BSDE
	\begin{equation}
	\label{eq:bsde_lip}
		-dY_t = g_t(Z_t)\,dt - Z_t\, dW_t, \quad Y_T = X.
	\end{equation}
	Moreover, $g$ satisfies $g(0) = 0$,
	\begin{equation}
	\label{eq:lipschtz_g}
		|g_t(z_1) - g_t(z_2)| \le C|z_1-z_2|
	\end{equation}
	for some $C\in (0,\kappa]$ and for all $z_1,z_2 \in \mathbb{R}^d$, $g(z)$ predictable and $E[\int_0^T|g_s(z)|^2\,ds]<\infty$ for all $z \in \mathbb{R}^d$.
	In particular, $g$ satisfies the condition \eqref{eq:fastgrowth} with $\varphi \equiv \kappa$, $\theta\equiv 0$ and $b = 0$.
	Since $\rho_{t,T}$ is convex, it follows from \cite[Theorem 3.2]{Jiang} that the function $g$ is convex, and using convex duality arguments,
	one has
	\begin{equation*}\textstyle
		\rho_{0,T}(X) \ge \sup_{Z\in L^2}(E[ZX] - \rho^*_{0,T}(Z)),\quad X \in L^2.
	\end{equation*}
	In particular, $\rho_{0,T}(X)\ge E[X]-\rho^*_{0,T}(1) = E[X]$.
	Thus, by the converse comparison for BSDEs, see e.g. \cite[Theorem 4.4]{Bri-Coq-Hu-Mem-Peng} combined with \cite[Lemma 2.1]{Jiang}, it holds $g_t(z)\ge 0$.
	Hence, $g$ is positive and convex.
	Furthermore, since $g$ satisfies \eqref{eq:lipschtz_g}, the unique solution $Y$ of \eqref{eq:bsde_lip} coincides with the minimal supersolution for all $X \in L^2$, see \cite[Remark 3.6]{tarpodual}.
	Thus, the result follows from Corollary \ref{cor:ForLipschitz}.
\end{proof}
The following is a natural example of ${\cal E}^\kappa$-dominated time-consistent dynamic risk measure.
\begin{example}
	Let $g:[0,T]\times \Omega\times \mathbb{R}^d\to \mathbb{R}$ be convex and Lipschitz continuous in the third variable with Lipschitz constant $\kappa\ge0$.
	Furthermore, assume $g_t(0)=0$ and $E[\int_0^T|g_t(z)|^2\,dt]<\infty$ for all $z \in \mathbb{R}^d$.
	Then, the dynamic risk measure $\rho_{s,t}(X) := Y_s$, where $(Y,Z)$ is the unique solution of the BSDE
	\begin{equation*}
		-dY_u = g_u(Z_u)\,du - Z_u\, dW_u, \quad Y_t = X \in L^2({\cal F}_t)
	\end{equation*}
	is time-consistent and ${\cal E}^\kappa$-dominated, see \cite{Coquet2002}.
\end{example}

\begin{example}
	Random variables $X \in \Lambda$ can be seen as claims, or financial losses whose liquidity risks is to be evaluated.
	For instance, consider a model with a stock following the dynamics $dS_t = S_t(b_t\,dt + \sigma_t\,dW_t)$ for some $b \in L^1(dt)$ and $\sigma^{j}:[0,T]\to [-1,1]$ measurable for all $j=1,\dots,d$.
	Let $X:= F(S_T)$ be the payoff of a contingent claim written on $S$.
	If $F(x) = \log(x)$, i.e. $X$ is the payoff of the so-called log-contract, see e.g. \cite{Neub94}, then
	\begin{equation*}\textstyle
		X= \log(S_0)+ \int_0^T(b_t -\frac{\sigma_t}{2})\,dt + \int_0^T\sigma_t\,dW_t.
	\end{equation*}
	It is clear that the r.h.s. above is in $\Lambda$, as $L^2$-limit of discrete-times stochastic integrals and since $|\sigma_t|\le 1$.
	Thus, for every time-consistent dynamic risk measures $(\rho_{s,t})$, that is ${\cal E}^\kappa$-dominated for some $\kappa>0$ there is a convex increasing function $l$ such that $\rho_{0,T}(\lambda X) \le \lambda E[X] + l(\lambda)$ for all $\lambda\ge 0$.
\end{example}
As mentioned in the introduction, there is a strong relation between BSDEs and time-consistent dynamic risk measures.
On the one hand, BSDEs allow to define dynamic risk measures.
This was first studied in \cite{gianin02,Jiang,peng02} for BSDEs with Lipschitz continuous generators and extended to the quadratic growth case in \cite{Hu-Ma-Peng-Yao,Kazi.Tani16} and to convex BSDEs in \cite{DHK1101,tarpodual}.
Reciprocally, under suitable additional conditions, time-consistent dynamic monetary risk measures can be seen as solutions of BSDEs. 
This "dynamic representation" of risk measures was initiated by \cite{Coquet2002} who introduced the concept of ${\cal E}^\kappa$-domination.
Refer to \cite{Hu-Ma-Peng-Yao,Del-Peng-Ros} and the references therein for further development.
\begin{remark}
	In light of Theorems \ref{thm:main} and \ref{cor:dyn-RM} it is fair to wonder whether concentration bounds for time consistent risk measures under weaker domination conditions as ${\cal E}^\kappa$-domination can be derived.
	The essential argument for such an extension is a BSDE representation of the risk measure.
	This can be obtained under a weaker domination assumption, but with additional structural conditions on the risk measure. See e.g. \cite[Theorem 6.3]{Hu-Ma-Peng-Yao} for details.
	Furthermore, the work of \citet{Del-Peng-Ros} allows to derive a BSDE representation of dynamic risk measures without assuming any domination condition.
	Concentration inequalities for such risk measures is left for future research.
\end{remark}
Let us mention that \citet{Lac-concent} gave an integrability criterion for the concentration property of static risk measures including optimized certainty equivalent and shortfall risk measures.
In the case where $\Omega$ is equipped with a metric $\delta$, \citet{Bob-Ding15} gave integral criteria on $\delta$ from which the concentration property for optimized certainty equivalent risk measures with power-type loss functions can be derived.
Notice that these functionals are typically time-inconsistent, except for the entropic case.

\section{Applications}
\label{sec:appli}

\paragraph{3.1 Transport-type inequalities.}
Let us borrow an argument from \citet{Goz-Leo10} to show that Theorem \ref{thm:main} leads to a transport-type inequality. 
In what follows, we put
$$\mathcal{W}_1( Q^q,P):= \sup_{X \in \Lambda}(E_{Q^q}[X]-E_P[X])\quad \text{and} \quad \mathcal{W}_1'( Q^q,P):= \sup_{X \in \Lambda'}(E_{Q^q}[X]-E_P[X]).$$
\begin{proposition}
\label{prop:transp-bsde}
	Let $g:\Omega\times [0,T]\times\mathbb{R}^d\to [0,\infty)$ be convex in $z$.
	If $g$ satisfies \eqref{eq:quad_growth}, then one has
	\begin{equation}\textstyle
	\label{eq:transport-bsde}
		l^*(\mathcal{W}_1( Q^q, P))\le E_{Q^q}[\int_0^Tg^*_u(q_u)\,du] \quad \text{for all } q \in  {\cal L},
	\end{equation}
	with $l$ as in Theorem \ref{thm:main}.
	If $g$ satisfies \eqref{eq:fastgrowth}, then one has
	\begin{equation}\textstyle
	\label{eq:transport-bsdefast}
		l^*(\mathcal{W}_1'( Q^q, P))\le E_{Q^q}[\int_0^Tg^*_u(q_u)\,du] \quad \text{for all } q \in  {\cal L},
	\end{equation}
	with $l$ as in Proposition \ref{prop:fastgrowth}.
\end{proposition}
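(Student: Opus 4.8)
The plan is to exploit the Gozlan--L\'eonard duality between concentration and transport-type inequalities, i.e. to read \eqref{eq:transport-bsde} and \eqref{eq:transport-bsdefast} as the convex conjugate (in the scaling variable $\lambda$) of the concentration bounds \eqref{eq:concentration} and \eqref{eq:concent_superqu}. Since $g$ is assumed here not to depend on $y$, the concentration inequalities hold on the full sets $\Lambda$ and $\Lambda'$, and the dual objects take the $\beta$-free form, with penalty $\alpha(q)=E_{Q^q}[\int_0^T g^*_u(q_u)\,du]$. The two ingredients I would use are therefore: (i) the concentration bound itself (Theorem \ref{thm:main} for the quadratic case, Proposition \ref{prop:fastgrowth} for the superquadratic case); and (ii) the weak-duality lower bound $\mathcal{E}(X)\ge E_{Q^q}[X]-\alpha(q)$, valid for every $q\in\mathcal{L}$ for which $Q^q$ is a probability measure. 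Note that the \emph{equality} in the dual representation is not needed---only the lower-bound half.

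The core computation is short. Fix such a $q$ and let $X$ range over $\Lambda$ (resp. $\Lambda'$). Combining (ii) applied to $\lambda X$ with the concentration bound gives, for every $\lambda\ge 0$,
\begin{equation*}
E_{Q^q}[\lambda X]-\alpha(q)\le \mathcal{E}(\lambda X)\le \lambda E_P[X]+l(\lambda),
\end{equation*}
hence $\lambda\big(E_{Q^q}[X]-E_P[X]\big)-l(\lambda)\le\alpha(q)$. Taking the supremum over $X\in\Lambda$ (resp. $\Lambda'$), and using that $0\in\Lambda$ and $\Lambda=-\Lambda$, so that $\mathcal{W}_1(Q^q,P)\ge 0$ and the supremum factors through $\lambda\ge0$, yields $\lambda\,\mathcal{W}_1(Q^q,P)-l(\lambda)\le\alpha(q)$ for all $\lambda\ge 0$. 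Finally, taking the supremum over $\lambda\ge 0$ and recalling that $l^*(s):=\sup_{\lambda\ge 0}(\lambda s-l(\lambda))$ produces $l^*(\mathcal{W}_1(Q^q,P))\le\alpha(q)$, which is \eqref{eq:transport-bsde}; the superquadratic case \eqref{eq:transport-bsdefast} follows verbatim with $\Lambda'$, $\mathcal{W}_1'$, and the $l$ of Proposition \ref{prop:fastgrowth}.

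The step I expect to be the main obstacle is ingredient (ii) together with the passage to all $q\in\mathcal{L}$. I would first dispose of the trivial case $\alpha(q)=+\infty$, and then establish weak duality for $q$ with $\alpha(q)<\infty$. Writing $W^q_t=W_t-\int_0^t q_u\,du$ for the $Q^q$-Brownian motion, the supersolution property of the minimal supersolution $(\bar Y,\bar Z)$ and the Fenchel--Young inequality $g_u(\bar Z_u)-\bar Z_u q_u\ge -g^*_u(q_u)$ give
\begin{equation*}
\bar Y_0\ge E_{Q^q}[X]-E_{Q^q}\Big[\int_0^T g^*_u(q_u)\,du\Big]-E_{Q^q}\Big[\int_0^T\bar Z_u\,dW^q_u\Big],
\end{equation*}
and the last term is nonpositive once $\int\bar Z\,dW^q$ is a $Q^q$-supermartingale; this is precisely the lower bound underlying the representation of \cite{tarpodual}. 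In the quadratic case the estimate \eqref{eq:estim.g.star} turns $\alpha(q)<\infty$ into $E_{Q^q}[\int_0^T|q_u|^2\,du]<\infty$, hence $H(Q^q|P)<\infty$ by the entropy identity from the proof of Theorem \ref{thm:main}, which guarantees that $Q^q$ is a genuine probability measure and supplies the integrability needed above. In the superquadratic case $g^*$ is only subquadratic, so this entropy control is lost; there I would restrict to those $q\in\mathcal{L}$ for which $M^q$ is a true martingale, so that $Q^q$ is a probability measure and $E_{Q^q}$, $\mathcal{W}_1'(Q^q,P)$ are well defined---this is the natural domain of the right-hand side---and the weak-duality lower bound applies unchanged.
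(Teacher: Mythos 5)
Your proposal is correct and follows essentially the same route as the paper: the paper's proof likewise sandwiches $\mathcal{E}(\lambda X)$ between the weak-duality lower bound $\lambda E_{Q^q}[X]-E_{Q^q}[\int_0^T g^*_u(q_u)\,du]$ from \cite[Theorem 3.10]{tarpodual} and the concentration upper bound of Theorem \ref{thm:main} (resp.\ Proposition \ref{prop:fastgrowth}), then takes the supremum over $X$ and $\lambda\ge0$. Your additional care about extending the lower bound from $q\in\mathcal{Q}$ to all $q\in\mathcal{L}$ with $Q^q$ a probability measure, and about the trivial case $\alpha(q)=+\infty$, is detail the paper delegates to the cited dual representation rather than a genuinely different argument.
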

\begin{proof}
It follows from Theorem \ref{thm:main} and the convex dual representation of BSDEs (see \cite[Theorem 3.10]{tarpodual}) that for all $X \in \Lambda$, $\lambda\ge0$ and $ q \in {\cal L}$,
\begin{equation*}\textstyle
	\lambda E_{Q^q}[ X] - E_{Q^q}[\int_0^Tg^*_u(q_u)\,du] \le {\cal E}(\lambda X)\le \lambda E[X] + l(\lambda).
\end{equation*}
Therefore, $\lambda(E_{Q^q}[ X] - E[X]) - l(\lambda)\le E_{Q^q}[\int_0^Tg^*_u(q_u)\,du]$. Taking the supremum over $X \in \Lambda$ and $\lambda\ge 0$ yields \eqref{eq:transport-bsde}.
The proof of \eqref{eq:transport-bsdefast} is the same.
\end{proof}

\begin{remark}
The transport-type inequality \eqref{eq:transport-bsde} can also be written w.r.t. subprobability measures, i.e. measures with density $D^\beta Q^q$ for some $(\beta, q)\in {\cal D}\times {\cal Q}$.
In fact, the arguments of Proposition \ref{prop:transp-bsde} lead to
\begin{equation*}
			l^*(\mathcal{W}^+_1(D^\beta Q^q, P))\le \alpha(\beta,q)\quad \text{for all } q \in  {\cal L},
\end{equation*}
with $\mathcal{W}^+_1(D^\beta Q^q,P):= \sup_{X \in \Lambda_+}(E_{Q^q}[D^\beta X]-E_P[X])$.
\end{remark}

\begin{corollary}
	Assume that $\Omega= C([0,1],\mathbb{R}^d)$ is the canonical space, $W$ the canonical process and $P$ the Wiener measure.
	Equip $\Omega$ with the augmented natural filtration of $W$.
	Further put $||\omega||_1:= \sum_{i = 1}^d\sup_{t \in [0,T]}|\omega_t^i|$ and
	\begin{equation*}\textstyle
		\mathcal{W}_1(Q_1,Q_2):= \inf_{\pi}\iint ||\omega_1 -\omega_2 ||\,d\pi(\omega_1,\omega_2),
	\end{equation*}
	where the infimum is over probability measures on $\Omega\times \Omega$ with marginals $Q_1$ and $Q_2$.
	Then, for $g$ and $l$ as in Proposition \ref{prop:fastgrowth}, it holds
	\begin{equation}\textstyle
	\label{eq:transport-bsde_cano}
		l^*\left( \mathcal{W}_1(Q^q,P) \right) \le E_{Q^q}[\int_0^Tg^*_u(q_u)\,du]\quad \text{for all} \quad q \in {\cal L}.
	\end{equation}
\end{corollary}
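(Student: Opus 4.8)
The plan is to deduce the canonical-space inequality \eqref{eq:transport-bsde_cano} from the transport inequality \eqref{eq:transport-bsdefast} of Proposition \ref{prop:transp-bsde}, the only genuinely new ingredient being the identification of the true Wasserstein distance $\mathcal{W}_1(Q^q,P)$ with the functional $\mathcal{W}_1'(Q^q,P)=\sup_{X\in\Lambda'}(E_{Q^q}[X]-E_P[X])$. The decisive observation is that the cost $\|\omega_1-\omega_2\|_1=\sum_{i=1}^d\sup_{t}|\omega_1^i-\omega_2^i|$ defining $\mathcal{W}_1$ is exactly the Lipschitz modulus \eqref{eq:1 Lips on C}. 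Hence, on the Polish space $(C_0([0,T],\mathbb{R}^d),\|\cdot\|_1)$ (the norm $\|\cdot\|_1$ being equivalent to the uniform norm, so that the space is Polish), the Kantorovich--Rubinstein duality \cite[Section 5]{Vil2} yields
\begin{equation*}
	\mathcal{W}_1(Q^q,P)=\sup\{E_{Q^q}[\phi]-E_P[\phi]:\ \phi:C_0([0,T],\mathbb{R}^d)\to\mathbb{R}\text{ is }1\text{-Lipschitz}\}.
\end{equation*}

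First I would dispose of the trivial case: if $E_{Q^q}[\int_0^T g^*_u(q_u)\,du]=+\infty$ there is nothing to prove, so I may assume this quantity is finite. Arguing as in the derivation of \eqref{eq:rhostar-entropy} within the proof of Proposition \ref{prop:fastgrowth} (where $a=0$, since $g_t(y,0)=0$), finiteness of this integral bounds the relative entropy $H(Q^q|P)$; in particular $Q^q$ is a genuine probability measure equivalent to the Wiener measure $P$, and by Gaussian concentration it has a finite first moment for $\|\cdot\|_1$. This guarantees $\mathcal{W}_1(Q^q,P)<\infty$ and secures the equality form of the Kantorovich--Rubinstein duality above (and not merely the elementary bound $\sup_\phi(\cdots)\le\mathcal{W}_1$).

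Next I would transfer the supremum to $\Lambda'$. Writing $\phi(W)$ for the composition of a $1$-Lipschitz $\phi$ with the canonical process $W$, so that $E_{Q^q}[\phi]=E_{Q^q}[\phi(W)]$ and likewise under $P$, the argument recorded in the proof of Corollary \ref{cor:ForLipschitz} shows that $\phi(W)\in\Lambda'$ for every such $\phi$. Therefore the supremum defining $\mathcal{W}_1(Q^q,P)$ ranges over a subfamily of $\Lambda'$, which gives
\begin{equation*}
	\mathcal{W}_1(Q^q,P)=\sup_{\phi\ 1\text{-Lip}}\big(E_{Q^q}[\phi(W)]-E_P[\phi(W)]\big)\le\sup_{X\in\Lambda'}\big(E_{Q^q}[X]-E_P[X]\big)=\mathcal{W}_1'(Q^q,P).
\end{equation*}
Since $l(\lambda)=e^{bT}cT\lambda^2$ gives $l^*(s)=s^2/(4e^{bT}cT)$ for $s\ge0$, the conjugate $l^*$ is non-decreasing on $[0,\infty)$; applying it to $\mathcal{W}_1(Q^q,P)\le\mathcal{W}_1'(Q^q,P)$ and invoking \eqref{eq:transport-bsdefast} yields
\begin{equation*}
	l^*\big(\mathcal{W}_1(Q^q,P)\big)\le l^*\big(\mathcal{W}_1'(Q^q,P)\big)\le E_{Q^q}\Big[\int_0^T g^*_u(q_u)\,du\Big],
\end{equation*}
which is \eqref{eq:transport-bsde_cano}.

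The main obstacle I anticipate is the rigorous invocation of Kantorovich--Rubinstein in infinite dimensions. The nontrivial content is the \emph{equality} $\mathcal{W}_1(Q^q,P)=\sup_\phi(\cdots)$, since the direction we actually need, namely $\mathcal{W}_1\le\sup_\phi(\cdots)$, is the hard (dual) inequality; the elementary pairing argument only delivers $\sup_\phi(\cdots)\le\mathcal{W}_1$. This is precisely why the reduction to finite relative entropy, and hence to a probability measure $Q^q$ with finite first moment, must be carried out first, so that the duality applies as an identity. Everything else is the monotonicity of $l^*$ combined with the embedding $\{\phi(W):\phi\ 1\text{-Lipschitz}\}\subseteq\Lambda'$ supplied by Corollary \ref{cor:ForLipschitz}.
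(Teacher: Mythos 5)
Your proposal is correct in substance and takes essentially the same route as the paper: the paper likewise combines the concentration inequality for $1$-Lipschitz functionals of $W$ (Corollary \ref{cor:ForLipschitz}, whose proof supplies exactly the embedding $\phi(W)\in\Lambda'$ that you use) with the dual representation, takes suprema over $\phi$ and $\lambda\ge0$, and concludes via Kantorovich--Rubinstein; routing through the already-proved inequality \eqref{eq:transport-bsdefast} instead of rederiving the two-sided bound is a cosmetic difference. You also correctly isolate the one nontrivial analytic point, namely that the direction actually needed, $\mathcal{W}_1(Q^q,P)\le\sup_\phi\big(E_{Q^q}[\phi]-E_P[\phi]\big)$, is the hard half of the duality, which the paper invokes without comment.

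One step of your write-up is however wrong as justified, though it is also dispensable. You claim that finiteness of $E_{Q^q}[\int_0^T g^*_u(q_u)\,du]$ bounds the relative entropy $H(Q^q|P)$ ``as in the derivation of \eqref{eq:rhostar-entropy}''. That derivation rests on the quadratic lower bound \eqref{eq:estim.g.star}, which requires the quadratic growth condition \eqref{eq:quad_growth}; in the setting of Proposition \ref{prop:fastgrowth} this bound is available only for the conjugate $\hat{g}^*$ of the truncated generator, not for the original $g^*$ appearing in \eqref{eq:transport-bsde_cano}. Indeed the whole point of Proposition \ref{prop:fastgrowth} is that $g$ may be superquadratic, in which case $g^*$ is \emph{sub}quadratic (e.g. $g(z)=|z|^3$ gives $g^*(q)$ of order $|q|^{3/2}$), and finiteness of $E_{Q^q}[\int_0^T g^*_u(q_u)\,du]$ then does not control $\frac12E_{Q^q}[\int_0^T|q_u|^2\,du]=H(Q^q|P)$. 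Fortunately the reduction you built on this claim is unnecessary: the Kantorovich--Rubinstein identity for a metric cost holds for arbitrary Borel probability measures on a Polish space with values in $[0,\infty]$ (approximate the cost by the bounded metrics $\|\cdot\|_1\wedge n$, for which duality holds with bounded $1$-Lipschitz potentials, note that every such potential is $\|\cdot\|_1$-$1$-Lipschitz, and pass to the limit using tightness of the set of couplings and monotone convergence). Hence $\mathcal{W}_1(Q^q,P)\le\sup_\phi\big(E_{Q^q}[\phi]-E_P[\phi]\big)\le\mathcal{W}_1'(Q^q,P)$ holds unconditionally, and monotonicity of $l^*$ on $[0,\infty]$, extended by $l^*(+\infty)=+\infty$, finishes the proof even when both sides of \eqref{eq:transport-bsde_cano} are infinite. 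With the entropy digression deleted and replaced by this observation, your argument is complete and coincides with the paper's.
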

\begin{proof}
	As in the proof of Proposition \ref{prop:transp-bsde}, it follows from Corollary \eqref{cor:ForLipschitz} that $\lambda(E_{Q^q}[ \phi(W)] - E[\phi(W)]) - l(\lambda)\le E_{Q^q}[\int_0^Tg^*_u(q_u)\,du]$ for all $1$-Lipschitz function $\phi$ (i.e. satisfying \eqref{eq:1 Lips on C}).
	Thus, taking the supremum over such functions $\phi$ and over $\lambda\ge 0$ and using that $W$ is the canonical process yields
	\begin{equation*}\textstyle
		l^*\left(\sup_{\phi}(E_{Q^q}[\phi] - E[\phi])\right) \le E_{Q^q}\left[\int_0^Tg^*_u(q_u)\,du\right].
	\end{equation*}
	By the Kantorovich-Rubinstein formula, the above inequality implies \eqref{eq:transport-bsde_cano}.
\end{proof}

In particular, when $g(z)=z^2/2$, one recovers with \eqref{eq:transport-bsde_cano} (admittedly with less direct arguments) the Gaussian transport-entropy inequality on $C([0,1],\mathbb{R}^d)$ obtained by \citet{Fey-Ust04} and \citet{Lehec}.
Let us give a PDE characterization of the transport-type inequality derived above.
We use the notation
\begin{equation*}\textstyle
	\mathcal{W}_{1,d}(\gamma^{d}, \mu) := \inf\iint|x-y|\,d\pi(x,y) 
\end{equation*}
where the infimum is over the set of probability measures $\pi$ on $\mathbb{R}^{d }\times \mathbb{R}^{d}$ with marginals $\gamma^{d}$ and $\mu$.
\begin{corollary}
	Let $g:\Omega \times [0,T]\times \mathbb{R}^d\to [0,\infty)$ be convex, smooth and with bounded derivatives, and $l:\mathbb{R}\to [0,\infty)$ be an increasing convex function.
	Let $(s,x) \in[0,T]\times \mathbb{R}^d$, put $W^{s,x}_t:= x + W_t-W_s$, the Brownian motion started at $s$ with the value $x$.
	Consider the following claims:
	\begin{itemize}
		\item[(i)] ${\cal E}(\lambda f(W^{s,x}_T)) \le \lambda E[f(W^{s,x}_T)] + l(\lambda)$ for all $\lambda\ge0$ and $f:\mathbb{R}^d\to \mathbb{R}$ 1-Lipschitz
		\item[(ii)] $l^*\Big(\sqrt{T-s}\mathcal{W}_{1,d}(\mu_q,\gamma^d)\Big) \le \alpha(q)$ for all $q\in {\cal L}$ and $\mu_q =Q^q\circ (W^{s,x}_T/\sqrt{T-s})^{-1}$
		\item[(iii)] 
		for every $\lambda \ge 0$, the solution $v^\lambda$ of the PDE
		\begin{equation}\textstyle
		\label{eq:PDE-charac}
			\begin{cases}
				&\partial_tv^\lambda + \frac{1}{2}\partial_{xx}v^\lambda + g_t( \partial_xv^\lambda) =0\\
				&v^\lambda(T,x) =  -l(\lambda)
			\end{cases}
		\end{equation}
		satisfies $v^\lambda(s,x)\le0$.
		\end{itemize}
	One has (i) is equivalent to (ii), and (i) or (ii) implies (iii).
\end{corollary}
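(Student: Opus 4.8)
The plan is to obtain the equivalence (i)$\Leftrightarrow$(ii) directly from the convex dual representation of $\mathcal{E}$ by a Legendre transform together with Kantorovich--Rubinstein duality, and then to deduce (iii) from (i)---hence also from (ii)---by combining the nonlinear Feynman--Kac formula with the cash-additivity of $\mathcal{E}$. Throughout I read $\mathcal{E}(\lambda f(W^{s,x}_T))$ as the value at $(s,x)$ of the Markovian functional attached to $g$ on $[s,T]$; since $g$ is smooth with bounded derivatives this value is the classical solution of the semilinear equation in \eqref{eq:PDE-charac} (with terminal datum $\lambda f$) evaluated at $(s,x)$.

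For (i)$\Leftrightarrow$(ii) I would start from the representation $\mathcal{E}(\lambda f(W^{s,x}_T)) = \sup_{q\in\mathcal{L}}\big(\lambda E_{Q^q}[f(W^{s,x}_T)]-\alpha(q)\big)$, the analogue on $[s,T]$ of \cite[Theorem 3.10]{tarpodual}. Writing $\hat f(z):=f(\sqrt{T-s}\,z)$, which is $\sqrt{T-s}$-Lipschitz when $f$ is $1$-Lipschitz, the Brownian scaling $W^{s,x}_T=x+\sqrt{T-s}\,Z$ gives $E_{Q^q}[f(W^{s,x}_T)]=\int\hat f\,d\mu_q$ and identifies $E_P[f(W^{s,x}_T)]$ with the integral of $\hat f$ against the Gaussian law $\gamma^d$ of the normalized endpoint $W^{s,x}_T/\sqrt{T-s}$ under $P$. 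Hence (i) reads $\lambda E_{Q^q}[f]-\alpha(q)\le\lambda E_P[f]+l(\lambda)$ for all $q,\lambda,f$, i.e. $\lambda\big(\int\hat f\,d\mu_q-\int\hat f\,d\gamma^d\big)-l(\lambda)\le\alpha(q)$; taking the supremum over $\lambda\ge0$ produces $l^*$, and, since $\hat f$ exhausts the $\sqrt{T-s}$-Lipschitz functions, taking the supremum over $1$-Lipschitz $f$ produces $\sqrt{T-s}\,\mathcal{W}_{1,d}(\mu_q,\gamma^d)$ by Kantorovich--Rubinstein; this is exactly (ii). Conversely, reading the same inequalities from (ii), bounding $\int\hat f\,d\mu_q-\int\hat f\,d\gamma^d\le\sqrt{T-s}\,\mathcal{W}_{1,d}(\mu_q,\gamma^d)$ and taking the supremum over $q$ recovers (i).

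For (i)$\Rightarrow$(iii) I would identify the solution of \eqref{eq:PDE-charac} with a value of $\mathcal{E}$. By the nonlinear Feynman--Kac formula the classical solution with the constant terminal datum $-l(\lambda)$ satisfies $v^\lambda(s,x)=\mathcal{E}(-l(\lambda))$, the functional at $(s,x)$ evaluated at the constant $-l(\lambda)$. Because $g$ does not depend on $y$, $\mathcal{E}$ is cash-additive, so $v^\lambda(s,x)=\mathcal{E}(0)-l(\lambda)$. Applying (i) to the ($1$-Lipschitz) function $f\equiv0$ gives $\mathcal{E}(0)\le l(\lambda)$ for every $\lambda\ge0$, whence $v^\lambda(s,x)=\mathcal{E}(0)-l(\lambda)\le0$, which is (iii). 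Since (i)$\Leftrightarrow$(ii), the same conclusion follows from (ii).

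The Kantorovich--Rubinstein duality and the Legendre transform bookkeeping are routine. The main obstacle I anticipate is justifying the two representations at the initial time $s$ rather than at $0$: both the dual formula of \cite{tarpodual} and the Feynman--Kac identification must be invoked for the flow of \eqref{eq:PDE-charac} started at $(s,x)$, which rests on the Markovian structure and on the classical solvability secured by the smoothness and bounded-derivative hypotheses on $g$. A secondary bookkeeping point is to carry the scaling factor $\sqrt{T-s}$ and the (centering of the) reference Gaussian $\gamma^d$ correctly through the change of variables $W^{s,x}_T=x+\sqrt{T-s}\,Z$.
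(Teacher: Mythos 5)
Your proposal is correct, and on the equivalence (i) $\Leftrightarrow$ (ii) it coincides with the paper's own argument: the dual representation of ${\cal E}$ from \cite[Theorem 3.10]{tarpodual} (whose exactness, which you need for the direction (ii) $\Rightarrow$ (i), is indeed available here because the smooth, bounded-derivative $g$ is Lipschitz, so the minimal supersolution coincides with the solution), the Legendre transform over $\lambda\ge 0$, the Brownian scaling $W^{s,x}_T=x+\sqrt{T-s}\,Z$, and the Kantorovich--Rubinstein formula; your parenthetical worry about centering $\gamma^d$ is shared by the paper itself, which also identifies the law of $W^{s,x}_T/\sqrt{T-s}$ under $P$ with $\gamma^d$ without recentering by $x/\sqrt{T-s}$. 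Where you genuinely depart from the paper is in (i) $\Rightarrow$ (iii). The paper defines $v^\lambda(t,y):={\cal E}\big(\lambda f(W^{t,y}_T)-\lambda E[f(W^{t,y}_T)]-l(\lambda)\big)$ for a \emph{general} $1$-Lipschitz $f$ and invokes \cite[Proposition 4.4]{karoui01} to assert that this $v^\lambda$ solves \eqref{eq:PDE-charac}. Strictly speaking, writing $m(t,y):=E[f(W^{t,y}_T)]$ (a caloric function), this $v^\lambda$ solves $\partial_t v+\tfrac12\partial_{xx}v+g_t(\partial_x v+\lambda\,\partial_x m)=0$ with terminal value $-l(\lambda)$, which agrees with \eqref{eq:PDE-charac} only when $\partial_x m\equiv 0$, i.e.\ when $f$ is constant. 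Your specialization to $f\equiv 0$ is exactly this constant case: the Feynman--Kac representation of the solution of \eqref{eq:PDE-charac} with constant terminal datum is $v^\lambda(s,x)={\cal E}(-l(\lambda))={\cal E}(0)-l(\lambda)$ (cash-additivity applies since $g$ does not depend on $y$), and (i) applied to the $1$-Lipschitz function $f\equiv 0$ gives ${\cal E}(0)\le l(\lambda)$, hence $v^\lambda(s,x)\le 0$. So your route is not only valid but quietly repairs the imprecision in the paper's step, at no loss of strength since the conclusion (iii) involves no $f$ anyway. Two reading conventions you adopt are also the paper's implicit ones and are worth stating explicitly in any write-up: ${\cal E}(\lambda f(W^{s,x}_T))$ must be understood as the time-$s$ Markovian value on $[s,T]$ (otherwise an additive term ${\cal E}$ accrued on $[0,s]$ pollutes both (i) and the Feynman--Kac identification), and $g$ must be taken deterministic for the PDE in (iii) to make sense, even though the corollary's hypothesis formally allows $g$ to depend on $\omega$.
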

\begin{proof}
	It follows as in the proof of Proposition \ref{prop:transp-bsde} that (i) is equivalent to 
	\begin{equation*}\textstyle
		l^*\Big(\sup_{f}\left(E_Q[f(W^{s,x}_T)] - E_P[f(W^{s,x}_T)] \right)\Big) \le \alpha(q)
	\end{equation*}
	where the supremum runs over the set of 1-Lipschitz functions.
	Thus, 
	\begin{equation*}\textstyle
		l^*\Big(\sqrt{T-s}\sup_{f}\Big(\int f\,d\mu_q - \int f\,d\gamma_d \Big) \Big) \le \alpha(q)
	\end{equation*}
	with $\mu_q = Q^q\circ(W^{s,x}_T/\sqrt{T-s})^{-1}$.
	This shows that (i) and (ii) are equivalent by the Kantorovich-Rubinstein formula.

	If (i) holds, then the function $v^\lambda(t,y):={\cal E}\Big(\lambda f(W^{t,y}_T)-\lambda E[f(W^{t,y}_T)]-l(\lambda)\Big)$ satisfies $v^\lambda(s,x)\le0$.
	Since $g$ is Lipschitz continuous, the minimal supersolution of the BSDE with generator $g$ coincides with its solution.
	By \cite[Proposition 4.4]{karoui01}, $v^\lambda$ solves the PDE \eqref{eq:PDE-charac}.
\end{proof}

\paragraph{3.2 Deviation inequalities.}
It is well-known that transport inequalities describe concentration properties and deviation inequalities.
Let us adapt the classical Marton's  argument  in our setting to derive deviation inequalities for  $X \in \Lambda$ (see \cite{Marton86}).
Below, denote by $f^{-1}$ the left-inverse of the increasing function $f$ and by $m_X$ a median of the random variable $X$, i.e. a real number such that $P(X \le m_X)\ge 1/2$ and $P(X \ge m_X)\ge 1/2$.
\begin{proposition}
	Let $g:\Omega \times [0,T]\times \mathbb{R}^d\to [0,\infty)$ be a convex function and $l:\mathbb{R}\to [0,\infty)$ be an increasing convex function such that
	\begin{equation}\textstyle
	\label{eq:transp-bsde-ineq}
				l^*(\mathcal{W}_1'( Q^q, P))\le E_{Q^q}\left[\int_0^Tg^*_u(q_u)\,du\right]\quad \text{for all } q \in  {\cal L}.
	\end{equation}
	If the PDE
	\begin{equation}\textstyle
	\label{eq:pde}
		\sup_{q \in \mathbb{R}^d}\left( g^*_t(q) -\partial_tf_t(x)- \frac{1}{2}\partial_{xx}f_t(x)|q|^2x^2- \partial_xf_t(x)|q|^2x \right) = f_0(1)/T
	\end{equation}
	admits a subsolution $\varphi:[0,T]\times(0,+\infty)\to\mathbb{R}$ which is increasing on $[\frac{1}{P(X \le m_X)},+\infty)$ for each $t \in [0,T]$, then for all $X \in \Lambda'$ one has
	\begin{equation}\textstyle
	\label{eq:deviation}
		P(X >m_X + r) \le \frac{1}{\varphi_T^{-1}\left(l^*\big(r - (l^*)^{-1}(\varphi_T(2))\big)\right)} \quad \text{for all } r> 0,
	\end{equation}
	with the convention $\frac{1}{0}:=+\infty$.
	The result also holds if $\mathcal{W}_1'$ is replaced by $\mathcal{W}_1$ and $\Lambda'$ by $\Lambda$.
\end{proposition}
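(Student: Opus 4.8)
The plan is to adapt Marton's argument. Set $A := \{X \le m_X\}$ and $B := \{X > m_X + r\}$ (assuming $P(B)>0$, else \eqref{eq:deviation} is trivial), with conditional laws $P_A := P(\cdot\mid A)$ and $P_B := P(\cdot\mid B)$. Since $m_X$ is a median, $P(A)\ge 1/2$ and $P(B)\le 1/2\le P(A)$, so $1\le 1/P(A)\le 2\le 1/P(B)$. Extend $\mathcal{W}_1'$ to arbitrary probability measures by $\mathcal{W}_1'(Q_1,Q_2):=\sup_{Y\in\Lambda'}(E_{Q_1}[Y]-E_{Q_2}[Y])$; since $\Lambda'$ is symmetric under $Y\mapsto -Y$, this is symmetric and satisfies the triangle inequality. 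As $X\in\Lambda'$ separates $A$ and $B$ (namely $X\le m_X$ on $A$ and $X>m_X+r$ on $B$), testing against $X$ gives $\mathcal{W}_1'(P_B,P_A)\ge E_{P_B}[X]-E_{P_A}[X]\ge r$. It then remains to bound $\mathcal{W}_1'(P_A,P)$ and $\mathcal{W}_1'(P_B,P)$ through $\varphi_T$.

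The \emph{core step} upgrades the hypothesis \eqref{eq:transp-bsde-ineq} via the subsolution $\varphi$ of \eqref{eq:pde}. For $q\in\mathcal{L}$ let $M^q$ be the density process, $dM^q_t=M^q_tq_t\,dW_t$, $M^q_0=1$; under $Q^q$, Girsanov's theorem makes $\tilde W_t:=W_t-\int_0^tq_u\,du$ a Brownian motion, so $dM^q_t=M^q_t|q_t|^2\,dt+M^q_tq_t\,d\tilde W_t$. Applying It\^o's formula to $\varphi_t(M^q_t)$ under $Q^q$, the drift equals $\partial_t\varphi_t(M^q_t)+\partial_x\varphi_t(M^q_t)|q_t|^2M^q_t+\frac12\partial_{xx}\varphi_t(M^q_t)|q_t|^2(M^q_t)^2$, which by the subsolution inequality (with $x=M^q_t$) is bounded below by $g^*_t(q_t)-\varphi_0(1)/T$. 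Integrating over $[0,T]$, taking $E_{Q^q}$ (localizing the stochastic integral and passing to the limit to kill the martingale term), and using $M^q_0=1$ yields $E_{Q^q}[\varphi_T(M^q_T)]\ge\alpha(q)=E_{Q^q}[\int_0^Tg^*_u(q_u)\,du]$, the constant $\varphi_0(1)$ cancelling precisely because the right-hand side of \eqref{eq:pde} is $f_0(1)/T$. Combined with \eqref{eq:transp-bsde-ineq}, this gives $l^*(\mathcal{W}_1'(Q^q,P))\le E_{Q^q}[\varphi_T(M^q_T)]$.

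Next I would pass from the measures $Q^q$ to the conditional laws, whose densities $\frac{1_A}{P(A)}$, $\frac{1_B}{P(B)}$ are not stochastic exponentials. The plan is to approximate $\frac{1_A}{P(A)}$ in $L^1(P)$ by densities $M^{q_n}$ of admissible $Q^{q_n}$ (as in the discretization/approximation arguments of the proof of Theorem \ref{thm:main}) and pass to the limit, using lower semicontinuity of $\mathcal{W}_1'(\cdot,P)$ and convergence of $E_{Q^{q_n}}[\varphi_T(M^{q_n}_T)]=E_P[M^{q_n}\varphi_T(M^{q_n})]$ to $E_P[\frac{1_A}{P(A)}\varphi_T(\frac{1_A}{P(A)})]$. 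Since the integrand vanishes off $A$ and equals $\frac{1}{P(A)}\varphi_T(\frac1{P(A)})$ on $A$, the limit collapses to $\varphi_T(1/P(A))$; hence $l^*(\mathcal{W}_1'(P_A,P))\le\varphi_T(1/P(A))$, and likewise $l^*(\mathcal{W}_1'(P_B,P))\le\varphi_T(1/P(B))$.

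Finally I would assemble the pieces. Invertibility of the increasing $l^*$ and of $\varphi_T$ (the latter on $[1/P(A),\infty)$, which contains both $2$ and $1/P(B)$) gives $\mathcal{W}_1'(P_A,P)\le(l^*)^{-1}(\varphi_T(1/P(A)))\le(l^*)^{-1}(\varphi_T(2))$ and $\mathcal{W}_1'(P_B,P)\le(l^*)^{-1}(\varphi_T(1/P(B)))$. The triangle inequality together with $\mathcal{W}_1'(P_B,P_A)\ge r$ yields $r\le(l^*)^{-1}(\varphi_T(2))+(l^*)^{-1}(\varphi_T(1/P(B)))$; solving for $P(B)$ by applying $l^*$ and then $\varphi_T^{-1}$ produces exactly \eqref{eq:deviation}, and the $\mathcal{W}_1$/$\Lambda$ variant follows verbatim using \eqref{eq:transport-bsde} in place of \eqref{eq:transp-bsde-ineq}. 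I expect the main obstacle to be the third paragraph: transferring the bound from the exponential-martingale measures $Q^q$ to the conditional laws $P_A,P_B$ — in particular controlling $E_P[M^{q_n}\varphi_T(M^{q_n})]$ near $\{M=0\}$ (equivalently, the behaviour of $x\mapsto x\varphi_T(x)$ as $x\downarrow 0$) and ensuring the transport bound is preserved in the limit — along with the regularity needed to apply It\^o's formula to the subsolution $\varphi$, which one handles by localization or a mollification/viscosity argument.
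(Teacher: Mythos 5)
Your proposal is correct and follows essentially the same route as the paper's proof: the identical Marton decomposition with $A=\{X\le m_X\}$, $B=\{X> m_X+r\}$ and the conditional laws $P^A,P^B$, the lower bound $\mathcal{W}_1'(P^B,P^A)\ge r$ obtained by testing against $X$ itself, the It\^o-plus-subsolution computation yielding $\alpha(q)\le E_{Q^q}[\varphi_T(M^q_T)]$ (with $\varphi_0(1)$ cancelling against the normalization $f_0(1)/T$, exactly as in the paper), and the same final assembly using $P(A)\ge 1/2$, the triangle inequality, and monotonicity of $l^*$ and of $\varphi_T$ on $[1/P(A),\infty)$. The single point of divergence is your third paragraph: the paper simply selects $q^A,q^B\in\mathcal{L}$ with $M^{q^A}_T=1_A/P(A)$ and $M^{q^B}_T=1_B/P(B)$ and applies the transport bound to them directly, whereas you replace this exact representation by an approximation of the conditional densities by stochastic-exponential densities followed by a passage to the limit. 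Your instinct here is sound --- indeed sounder than the paper's shortcut: for $q\in\mathcal{L}$ one has $\int_0^T|q_u|^2\,du<\infty$ $P$-a.s., hence $M^q_T>0$ $P$-a.s., while $1_A/P(A)$ vanishes on $A^c$, so the representation the paper invokes cannot hold verbatim and requires either enlarging $\mathcal{L}$ (allowing $\int_0^T|q_u|^2\,du=\infty$ with $M^q_T=0$ there) or precisely the approximation you sketch. What your variant buys is a repair of this soft spot, at the cost of work you only outline: constructing the $q_n$, controlling $E_P[M^{q_n}\varphi_T(M^{q_n})]$ via the behaviour of $x\mapsto x\varphi_T(x)$ near $0$, ensuring $E[M^{q_n}Y]\to E\bigl[\tfrac{1_A}{P(A)}Y\bigr]$ for the unbounded Lipschitz test functions $Y\in\Lambda'$ (e.g.\ via $L^2$-convergence of the densities), and justifying It\^o's formula for a merely subsolution-regular $\varphi$ by localization or mollification. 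In short: same proof as the paper's, with one step handled more carefully (though only in outline) at the very point where the paper's own argument is cavalier.
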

\begin{proof}
	Let $r> 0$ be fixed, put $A:=\{X\le m_X\}$ and $B:= \Omega \setminus \{X\le m_X + r\}$.
	If $P(B)=0$, then the result is clear.
	Assume $P(B)>0$ and denote by $P^A$ and $P^B$ the probability measures absolutely continuous w.r.t. $P$ and with density $1_A/P(A)$ and $1_B/P(B)$, respectively.
	Notice that $\mathcal{W}'_1$ defined in Proposition \ref{prop:transp-bsde} satisfies the triangle inequality $\mathcal{W}_1'(P^A, P^B)\le \mathcal{W}_1'(P^A,P) + \mathcal{W}_1'(P,P^B)$.
	Taking $X \in \Lambda'$, one has
	\begin{equation*}\textstyle
		\mathcal{W}_1'(P^A, P^B)\ge E[-1_AX/P(A) + 1_BX/P(B)] \ge -m_X + (r + m_X) = r.
	\end{equation*}
	Thus, letting $q^A$ and $q^B$ in ${\cal L}$ be such that $M^{q^A}_T=1_A/P(A)$ and $M^{q^B}_T=1_B/P(B)$, respectively, it follows from \eqref{eq:transp-bsde-ineq} and the triangle inequality that
	\begin{equation*}\textstyle
		r \le (l^*)^{-1}\left(\alpha(q^A) \right) + (l^*)^{-1}\left(\alpha(q^B) \right).
	\end{equation*}
	From It\^o's formula, for every $q \in {\cal L}$ one has
	\begin{equation*}\textstyle
		E_{Q^q}[\varphi_T(M^q_T)] -\varphi_0(1) = E_{Q^q}\left[\int_0^T \partial_t\varphi_t(M^q_t)+\frac{1}{2}\partial_{xx}\varphi_t(M^q_t)|q_t|^2(M^q_t)^2 + \partial_x\varphi_t(M^q_t)|q_t|^2M^q_t\,dt \right].
	\end{equation*}
	Since $\varphi$ is a  subsolution of \eqref{eq:pde}, it holds $\alpha(q) \le E_{Q^q}[\varphi_T(M^q_T)]$.	
	Therefore, 
	\begin{align*}
		r &\le (l^*)^{-1}\left(E\left[\frac{1_A}{P(A)}\varphi_T(1_A/P(A)) \right] \right) + (l^*)^{-1}\left(E\left[\frac{1_B}{P(B)}\varphi_T(1_B/P(B)) \right] \right)\\
		& =  (l^*)^{-1}\left( \varphi_T(1/P(A)) \right) + (l^*)^{-1}\left(\varphi_T(1/P(B)) \right).
	\end{align*}
	Hence, using $P(A)\ge 1/2$, it follows
	\begin{equation*}
		P(B) = 1 - P(X\le m_X + r) \le \frac{1}{\varphi_T^{-1}\left(l^*\big(r - (l^*)^{-1}(\varphi_T(2))\big)\right)},
	\end{equation*}
	this concludes the proof.
\end{proof}

The most classical example arises when $g(z)=|z|^2/2$. 
In this case, ${\cal E}(X) = \log(E[e^X])$, by Proposition \ref{prop:transp-bsde} every $q \in {\cal L}$ satisfies \eqref{eq:transp-bsde-ineq}, and $\varphi(x)=\log(x)$ is a classical solution of \eqref{eq:pde}.
Thus the inequality \eqref{eq:deviation} becomes 
	\begin{equation}
	\label{eq:gaussian}
		P\left( X>r +m_X \right) \le e^{-l^*(r-(l^*)^{-1}(\log(2)))}\quad \text{for all }  r>0,
	\end{equation}
the classical  Gaussian concentration.
Notice that $\varphi$ is a subsolution of \eqref{eq:pde} if and only if
\begin{equation*}
	g^*_t(q) \leq \varphi_0(1)/T + \partial_t\varphi_t(x) + \left(\frac{1}{2}\partial_{xx}\varphi_t(x)x^2 + \partial_x\varphi_t(x)x \right)|q|^2 
\end{equation*}
for every $x >0$ and $q \in \mathbb{R}^d$.
Assume that $g$ is such that $g^*$ satisfies $g^*(q)\le B + C|q|^2$ for some $C>0$ and let $\varphi(x)=\frac{2}{k}\log(x)$.
It holds $\frac 12x^2\varphi''(x) + x\varphi'(x)=\frac 1k$.
Hence, if $\frac 1k> C$, then \eqref{eq:pde} holds.
It is important, however, to remember that the function $l$ such that \eqref{eq:transp-bsde-ineq} is satisfied is quadratic and must depend on $C$.

\paragraph{3.3 Dimension-free bounds.} For identically distributed claims, 
we show that the bounds obtained above become dimension-free:
\begin{proposition}
	Let $g:\Omega\times [0,T]\times\mathbb{R}\times \mathbb{R}^d\to [0,\infty)$ be a convex function (in $(y,z)$) increasing in $y$.
	If $(X_i)_{i=1,\dots,n}$ is a family of identically distributed random variables such that ${\cal E}(\lambda X_i) \le E[\lambda X_i]+l(\lambda) $ for all $\lambda\ge0$ and some convex increasing function $l:\mathbb{R}\to [0,\infty)$, then
	\begin{equation}\textstyle
	\label{eq:dim-free}
		{\cal E}\left(\lambda \frac{1}{n}\sum_{i=1}^nX_i\right) \le \lambda E[X_i] + l(\lambda) \quad \text{for all } \lambda\ge0.
	\end{equation}
\end{proposition}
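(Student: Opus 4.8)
The plan is to isolate the single structural fact that does all the work, namely the convexity of the functional ${\cal E}$ on $L^2$, and then to read off the claim as a one-line averaging estimate in which the identical distribution of the $X_i$ collapses the per-coordinate bounds to a single, $n$-independent bound.

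First I would record that ${\cal E}$ is convex. This is inherited directly from the joint convexity of $g$: given supersolutions $(Y^1,Z^1)\in{\cal A}(X^1)$ and $(Y^2,Z^2)\in{\cal A}(X^2)$ and $\theta\in[0,1]$, the convex combination $(Y,Z):=\theta(Y^1,Z^1)+(1-\theta)(Y^2,Z^2)$ again fulfills \eqref{eq:supersolutions} with terminal datum $\theta X^1+(1-\theta)X^2$ — the supersolution inequality is preserved because $g_u(Y_u,Z_u)\le\theta g_u(Y^1_u,Z^1_u)+(1-\theta)g_u(Y^2_u,Z^2_u)$ by convexity, and $\int Z\,dW$ remains a supermartingale as a convex combination of supermartingales — whence ${\cal E}(\theta X^1+(1-\theta)X^2)\le\theta{\cal E}(X^1)+(1-\theta){\cal E}(X^2)$. (Alternatively, convexity is immediate from the dual representation \eqref{eq:rep_general}, which exhibits ${\cal E}$ as a supremum of affine maps.)

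Next, fix $\lambda\ge0$ and write the scaled empirical mean as the convex combination $\lambda\frac1n\sum_{i=1}^nX_i=\frac1n\sum_{i=1}^n(\lambda X_i)$. Finite Jensen applied to the convex ${\cal E}$ gives
\begin{equation*}\textstyle
	{\cal E}\Big(\lambda\frac1n\sum_{i=1}^nX_i\Big)\le\frac1n\sum_{i=1}^n{\cal E}(\lambda X_i),
\end{equation*}
and inserting the hypothesis ${\cal E}(\lambda X_i)\le\lambda E[X_i]+l(\lambda)$ into each summand bounds the right-hand side by $\frac1n\sum_{i=1}^n\big(\lambda E[X_i]+l(\lambda)\big)$.

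Finally I would invoke that the $X_i$ are identically distributed: their common law forces $E[X_i]$ to be independent of $i$, so that $\frac1n\sum_{i=1}^n\lambda E[X_i]=\lambda E[X_1]$, while trivially $\frac1n\sum_{i=1}^nl(\lambda)=l(\lambda)$. Chaining the displays yields \eqref{eq:dim-free}. I do not expect a genuine obstacle: once convexity is in hand the argument is immediate. The only point worth flagging is that it is \emph{identical distribution}, and not full law-invariance of ${\cal E}$, that is being used — the per-coordinate estimates are already supplied by hypothesis and merely need to be averaged — and that the resulting bound carries no dependence on $n$, which is precisely the dimension-free phenomenon being asserted.
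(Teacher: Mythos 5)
Your proposal is correct and matches the paper's own argument exactly: the paper likewise deduces convexity of ${\cal E}$ from the convexity of $g$, applies the finite convex-combination inequality to $\lambda\frac{1}{n}\sum_{i=1}^n X_i=\frac{1}{n}\sum_{i=1}^n(\lambda X_i)$, and concludes via identical distribution of the $X_i$. Your additional detail on \emph{why} ${\cal E}$ is convex (stability of supersolutions under convex combinations, or the dual representation as a supremum of affine maps) merely fills in a step the paper takes for granted.
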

\begin{proof}
	Since $g$ is convex, the functional ${\cal E}$ is also convex.
	Thus, one has 
	\begin{equation*}\textstyle
		{\cal E}\left(\lambda \frac{1}{n}\sum_{i=1}^nX_i\right) \le \sum_{i=1}^n\frac{1}{n}{\cal E}\left(\lambda X_i\right)\le \sum_{i=1}^n\frac{1}{n}\left( \lambda E[X_i] + l\left(\lambda\right)\right).
	\end{equation*}
The result follows since $X_i$ are identically distributed.
\end{proof}
The concentration inequality \eqref{eq:dim-free} is dimension-free in the sense that it does not depend on $n$, which implies that the bound cannot be improved by increasing the dimension.

In particular, if ${\cal E}(\frac{1}{n}\sum_{i=1}^nX_i)$ is viewed as the minimal superhedging price of the portfolio $\frac{1}{n}\sum_{i=1}^nX_i$, then the inequality \eqref{eq:dim-free} means that the superhedging price remains bounded from above when the portfolio increases and the bound is independent of the size of the portfolio.

\bibliographystyle{abbrvnat}

\vspace{1cm}

\noindent Ludovic Tangpi, Fakult\"at f\"ur Mathematik, Universit\"at Wien; Austria\\ 
{\small\textit{E-mail address:} ludovic.tangpi@univie.ac.at}\\
Financial support from Vienna Science and Technology Fund (WWTF) under Grant MA 14-008 is gratefully acknowledged.
\end{document}